\long\def\comment#1{}
\newfont{\bbb}{msbm10 scaled 700}
\newfont{\bb}{msbm10 scaled 1100}
\newcommand{\rv}{{\bf r}}
\newcommand{\Am}{{\bf A}}
\newcommand{\Xm}{{\bf X}}
\newcommand{\Lc}{{\cal L}}
\newcommand{\Mc}{{\cal M}}
\newcommand{\Nc}{{\cal N}}
\newcommand{\Xc}{{\cal X}}
\renewcommand{\arg}{{\hbox{arg}}}
\newenvironment{varalgorithm}[1]
  {\algorithm[t!]}
  {\endalgorithm}
\newlength\figureheight
\newlength\figurewidth
\newcommand\notype[1]{\unskip}
\newtheorem{prop}{Proposition}
\newtheorem{remk}{Remark}
\acrodef{ATG}{air-to-ground}
\acrodef{SC}{small cell}
\acrodef{NFP}{networked flying platform}
\acrodef{GAP}{generalized assignment problem}
\acrodef{mmWave}{millimeter-Wave}
\acrodef{FSO}{free space optics}
\acrodef{C-RAN}{cloud radio access network}
\acrodef{CAPEX}{capital expenditure}
\acrodef{NLoS}{non-line-of-sight}
\acrodef{LoS}{line-of-sight}
\acrodef{HetNet}{heterogeneous network}
\acrodef{LAP}{low-altitude platform}
\acrodef{MAP}{medium-altitude platform}
\acrodef{HAP}{high-altitude platform}
\acrodef{FSPL}{free-space path loss}
\acrodef{QoS}{quality-of-service}
\acrodef{LP}{Linear Programming}
\begin{document}

\title{Small Cell Association with Networked Flying Platforms: Novel Algorithms and Performance Bounds}

\author{Syed~Awais~Wahab~Shah,
        Tamer~Khattab,~\IEEEmembership{Senior Member,~IEEE,}
        Muhammad~Zeeshan~Shakir,~\IEEEmembership{Senior Member,~IEEE,}
        Mohammad~Galal~Khafagy,~\IEEEmembership{Senior Member,~IEEE,}
        and~Mazen~Omar~Hasna,~\IEEEmembership{Senior Member,~IEEE}%
\thanks{S. A. W. Shah, T. Khattab, M. G. Khafagy and M. O. Hasna are with the Electrical Engineering department, Qatar University, Doha, Qatar (e-mail: \{syed.shah, tkhattab, mkhafagy, hasna\}@qu.edu.qa). M. Z. Shakir is with the School of Engineering and Computing, University of the West of Scotland, Paisley, Scotland, UK (e-mail: muhammad.shakir@uws.ac.uk).}}

%\markboth{Submitted to IEEE JSAC Special Issue on Airborne Communication Networks 2018}
%{Shell \MakeLowercase{\textit{et al.}}: Bare Demo of IEEEtran.cls for Journals}

\maketitle

\begin{abstract}
Fifth generation (5G) and beyond-5G (B5G) systems expect coverage and capacity enhancements along with the consideration of limited power, cost and spectrum. Densification of \acp{SC} is a promising approach to cater these demands of 5G and B5G systems. However, such an ultra dense network of \acp{SC} requires provision of smart backhaul and fronthaul networks. In this paper, we employ a scalable idea of using \acp{NFP} as aerial hubs to provide fronthaul connectivity to the \acp{SC}. We consider the association problem of \acp{SC} and \acp{NFP} in a \ac{SC} network and study the effect of practical constraints related to the system and \acp{NFP}. Mainly, we show that the association problem is related to the \ac{GAP}. Using this relation with the \ac{GAP}, we show the NP-hard complexity of the association problem and further derive an upper bound for the maximum achievable sum data rate. Linear Programming relaxation of the problem is also studied to compare the results with the derived bounds. Finally, two efficient (less complex) greedy solutions of the association problem are presented, where one of them is a distributed solution and the other one is its centralized version. Numerical results show a favorable performance of the presented algorithms with respect to the exhaustive search and derived bounds. The computational complexity comparison of the algorithms with the exhaustive search is also presented to show that the presented algorithms can be practically implemented.
\end{abstract}

\begin{IEEEkeywords}
5G, backhaul/fronthaul network, binary integer linear program, drones, networked flying platforms (NFPs), small-cell networks, Unmanned aerial vehicles (UAVs)
\end{IEEEkeywords}

\IEEEpeerreviewmaketitle
\acresetall

\section{Introduction}\label{sec:intro}
\IEEEPARstart{T}{he} advancement of technology (such as video services) and a rapid growth in the number of cellular users (such as mobile devices and tablets etc.,) have been pushing the limits of wireless communication systems. Next generation systems expect coverage and capacity enhancements along with the consideration of limited power, cost and spectrum. To cater for these demands, a ten-fold increase in the radio spectrum is required \cite{pizzinat2015}. Therefore, researchers both in academia and industry are looking towards latest wireless technologies such as \ac{mmWave} and \ac{FSO}, as they can provide hundreds of megahertz of bandwidth for wireless transmission. However, these wireless technologies under the usual power constraints have a limited range as the signal degrades due to environmental effects. This transmitter-receiver distance reduction and the growing cellular user crowds lead to the idea of \ac{SC} densification. This densification of \acp{SC} (e.g., pico and femto cells) is being considered as a corner stone of fifth generation (5G) and beyond-5G (B5G) cellular networks.

China Mobile in 2011 \cite{chih2014toward} proposed a \ac{C-RAN} architecture, which is considered as a promising paradigm for 5G and B5G cellular systems, as it can resolve the backhaul traffic limitations by providing a fronthaul link. Due to the dense deployment of \acp{SC}, fronthaul links demand a high capacity of more than 2.5 Gbps with a low latency of around 100 $\mu$s or less \cite{Ericsson_m2020}. In terms of wired technology, these demands can be fulfilled only by fiber optical links as they offer an abundant bandwidth with low latency data transfer. However, such fiber links deployment results in high \ac{CAPEX} as compared to wireless fronthaul links \cite{SC_virt15}. Wireless fronthaul links can be realized using microwave bands for \ac{NLoS} case or \ac{mmWave}/\ac{FSO} for \ac{LoS} case. Microwave links can cover a wide area but suffer from low data rates as currently available commercial products provides a maximum of 2 Gbps throughput \cite{SC_virt15}. \ac{FSO} and \ac{mmWave} based fronthaul links have attracted an eye of various researchers as they meet the capacity requirements of 5G and B5G systems and they are light-weight and easy to install. However, \ac{mmWave}/\ac{FSO} suffer from susceptibility to weather conditions \cite{ShakirFSOMAG} and require a \ac{LoS} connection, which is a main hurdle in urban regions due to few available ground locations. Recently, a scalable idea was presented in \cite{ShakirFSOMAG} that utilizes \acp{NFP} as a wireless fronthaul hub point between \acp{SC} and core network. These \ac{NFP}-hubs provide a possibility of wireless \ac{LoS} fronthaul link to utilize radio frequency (RF), \ac{mmWave} and \ac{FSO} technologies, and thus, overcomes the limitations of few available wireless \ac{NLoS} ground fronthaul links.

Recently, both academia and industry started taking interest in utilizing \acp{NFP} such as unmanned aerial vehicles (UAVs), drones and unmanned balloons for wireless communications. These \acp{NFP} can be manually controlled but mainly designed for autonomous pre-determined flights. Latest \acp{NFP} are capable of carrying RF/\ac{mmWave}/\ac{FSO} payloads along with an extended battery life \cite{ahmadi2017novel}. On the basis of their flying altitude range, \acp{NFP} are categorized into \ac{LAP} (less than 5km), \ac{MAP} (between 5km to 10km) and \ac{HAP} (greater than 10km).

In this work, we employ \acp{NFP} as aerial hubs to provide fronthaul connectivity to a network of \acp{SC}. We define the association problem of \acp{SC} and \acp{NFP}, present its performance bounds, then propose novel efficient (less computationally complex) centralized and distributed greedy algorithms for its solution.

\subsection{Related Work}\label{sec:Rel_Work}
With the popularity of \acp{NFP}, a widely used \ac{ATG} propagation model was presented in \cite{ATGmodel}. This model considers the aerial communication between \acp{NFP} and terrestrial nodes. Later on, a closed form expression of the path loss and the effect of change of altitude of the \ac{NFP} over the coverage area was presented in \cite{ATG_optDrone1}. For the case of two \acp{NFP}, the coverage area was analyzed by varying the distance between the \acp{NFP} and their altitudes in \cite{TwoDrones}.

In the literature, 3D placement of \acp{NFP} and a related research problem of the association of \acp{NFP} and users were studied by a few researchers \cite{IremOneDrone, ElhamBackhaul, alzenad20173d, ElhamMultiPSO, Mozaffari2016, Sharma2016, mozaffari2017OTT, Mozaffari2017OTTTrans, kalantari2017user}. In all of those works, the \acp{NFP} were used as flying base stations (BSs) to provide wireless connectivity to the ground users. In \cite{IremOneDrone}, authors have designed the 3D placement problem of a single \ac{NFP} BS considering only the signal-to-noise ratio (SNR) as a \ac{QoS} parameter and studied the coverage region of the \ac{NFP} BSs for different urban environments. A number of constraints including backhaul data rate, maximum bandwidth of a single \ac{NFP} and path loss were taken into consideration for joint 3D placement and association problem of a single \ac{NFP} BS in \cite{ElhamBackhaul}. However, a computationally-expensive and not practically-implementable exhaustive search method was used in both \cite{IremOneDrone} and \cite{ElhamBackhaul} to solve the designed problems. In \cite{alzenad20173d}, the 3D placement problem was decoupled into first finding the optimal altitude and then using circle placement problem to optimize the 2D placement of a single \ac{NFP} BS in order to maximize the number of users in a covered region.

For the case of multiple NFPs, association of NFP BSs and users on the basis of SINR parameter was presented in \cite{ElhamMultiPSO}, then the 3D placement problem is solved using particle swarm optimization (PSO) algorithm. The work in \cite{Mozaffari2016} and \cite{Sharma2016} dealt with the 3D placement of the \acp{NFP} considering only the SINR constraint, where the former used circle packing theory to enhance the coverage performance with minimum power, while the later used entropy and network bargaining approaches to enhance the capacity and coverage area. In \cite{mozaffari2017OTT} and \cite{Mozaffari2017OTTTrans}, a delay-sensitive cell association problem was designed for multiple \ac{NFP} BSs that co-exist with terrestrial BSs and optimal packing theory was used to solve the designed problem. A \ac{LP} relaxation along with rounding was used in \cite{kalantari2017user} to solve the association problem and then the PSO algorithm was utilized to solve the 3D placement problem. However, due to \ac{LP} relaxation and then rounding, a number of constraints of the association problem may not be satisfied exactly.

Since \acp{NFP} gain popularity in communication systems, they have been studied as either repeaters or BSs to enhance the network coverage and signal strength mainly in hard to reach areas. To the best of our knowledge, there is only one work, \cite{ShakirFSOMAG}, in the literature apart from our recently published conference papers, \cite{AwaisGlobeCom} and \cite{AwaisPIMRC}, that uses \acp{NFP} as hub points to provide fronthaul connectivity. The work in \cite{ShakirFSOMAG} was limited to the feasibility study of using \acp{NFP} as fronthaul hubs, design of backhaul framework, investigation about the effect of weather conditions on the system and evaluation of the implementation cost of the proposed system as compared to other wired/wireless fronthaul/backhaul links. In \cite{AwaisGlobeCom} and \cite{AwaisPIMRC}, we formulated and analyzed the association problem of \acp{SC} and \acp{NFP}. Further, instead of using exhaustive search methods, we presented efficient greedy algorithms.

\begin{figure*}[t!]\centering
    \includegraphics[width=10cm]{./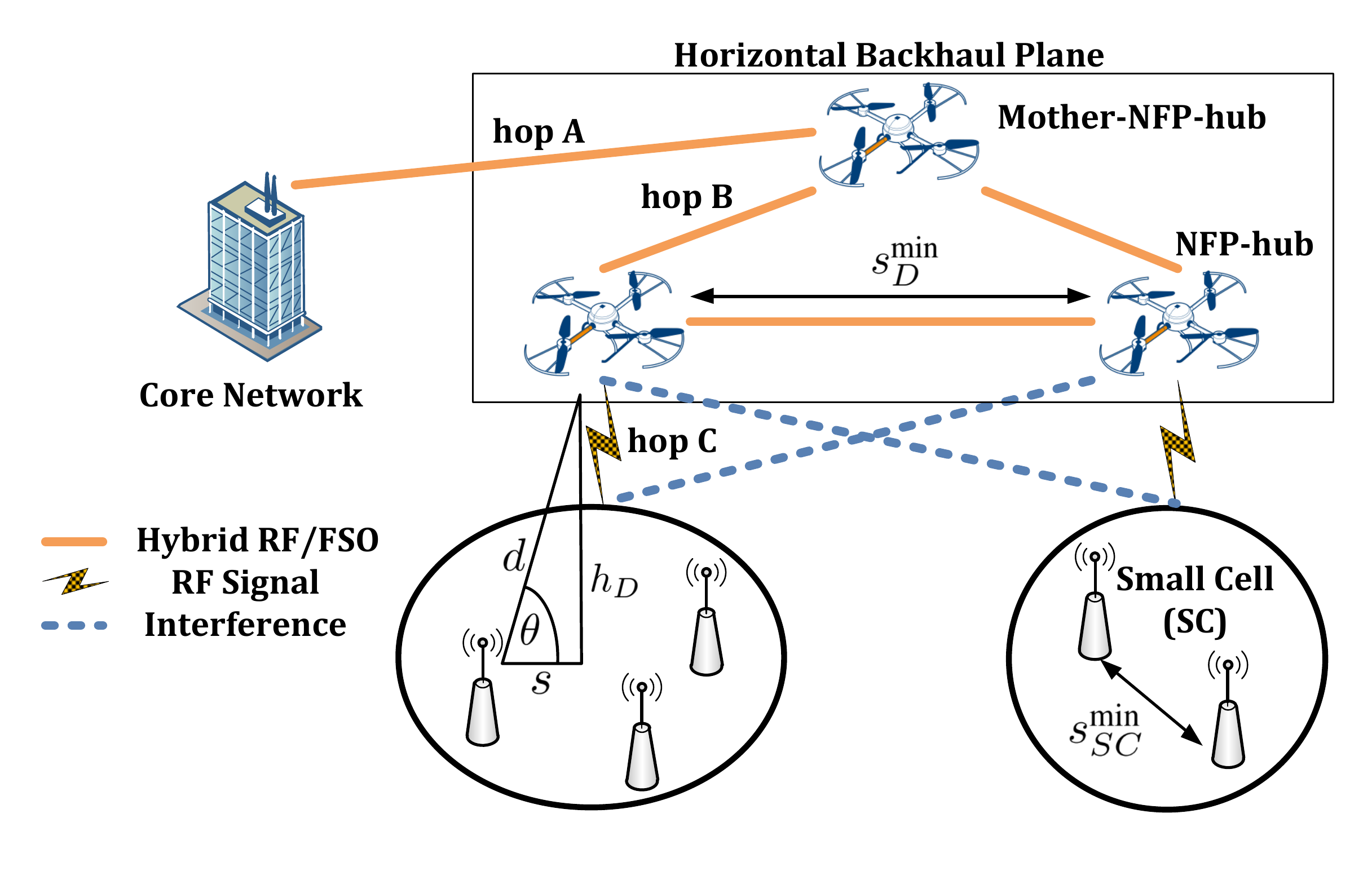}
	\caption{Graphical illustration of HetNet of \acp{SC} and \acp{NFP} communication system.}
	\label{fig:SysMod}
\end{figure*}

\subsection{Contributions}\label{sec:Contribution}
This work is an extension of our work in \cite{AwaisGlobeCom} and \cite{AwaisPIMRC}. Here, we reconsider the mathematical problem formulation for the association of \acp{SC} and \acp{NFP}. We also study in detail the effect of a number of practical constraints on the association problem, where the constraints are related to 5G and B5G systems and \acp{NFP}, such as backhaul data rate, \ac{NFP}'s bandwidth and number of links limitations. We further modify our previously proposed greedy algorithms to achieve enhanced problem solutions.

In this work, we propose an analytical framework for the analysis of the association problem of \acp{SC} and \acp{NFP}. On top of the proposed framework, the main contributions of the paper are summarized as follows:
\begin{enumerate}[label=\roman*)]
  \item We show the relevance of the association problem of \acp{SC} and \acp{NFP} with the \ac{GAP}. Using this relevance, for the first time in the literature to the best of the authors' knowledge, we show that the association problem is at least NP-hard.
  \item Again, capitalizing on the relation with the \ac{GAP} problem, we present an analytical derivation of the upper bound for the association problem with some relaxations. This follows the same framework used for a well known branch and bound (B\&B) method for \ac{GAP} \cite{rossBound}.
  \item We present efficient (less complex) greedy solutions for the association problem as opposed to the exhaustive search presented in literature for related problems\footnote{The algorithms presented here are further modified and enhanced versions of our algorithms presented in \cite{AwaisGlobeCom} and \cite{AwaisPIMRC}.}.
\end{enumerate}

\subsection{Paper Organization and Notations}\label{sec:org_not}
The rest of the paper is organized as follows. In section \ref{sec:Sys_Prob}, a system model and the association problem are presented. Section \ref{sec:Prob_Approx} includes the relation of association problem with the GAP, its approximation and the upper bounds using the same relevance. Section \ref{sec:Prop_Sol} presents two efficient greedy solutions proposed for the association problem. Numerical results and related discussions are presented in section \ref{sec:Sim_Res}. Computational complexity of the algorithms is discussed in section \ref{sec:Comp_Complexity} and finally section \ref{sec:Conc} concludes the paper.

Following notations are used in the paper. A constant number is denoted as either $x$ or $X$. The matrix and its ($i,j$)-th entry are denoted by $\Xm$ and $x_{ij}$ or $X_{ij}$, respectively. A set or a list is represented as $\Xc$. Notation $\boldsymbol{0}_{A \times B}$ represents an $A \times B$ matrix of all 0 entries.

\section{System Model and Problem Formulation}\label{sec:Sys_Prob}
This section first presents the \ac{NFP} connected \acp{SC} system model under consideration. An \ac{ATG} channel model is provided that highlights mainly the path loss parameter for the communication between \acp{NFP} and \acp{SC}. Finally, the association problem of \acp{SC} and \acp{NFP} is emphasized and mathematically formulated for the user-centric case, where the objective is to maximize the overall sum data rate.

\subsection{System Model}\label{sec:Sys_Model}
Consider a \ac{HetNet} (e.g., a 5G or B5G system) as shown in Fig. \ref{fig:SysMod} that consists of three classes of wireless nodes: i) ground \acp{SC}, ii) \ac{NFP}-hubs, and iii) ground core network gateway. \acp{SC} accumulate and route the downlink/uplink traffic between cellular users and core network using fronthaul links. \acp{NFP} act as hub points to provide fronthaul connectivity between \acp{SC} and core network. For brevity, \ac{NFP}-hubs will be referred to as \acp{NFP}. \acp{NFP} are distributed in a two-level hierarchy, where a number of \acp{NFP} spread over a region up to an altitude of 5km, i.e., \ac{LAP}, and \acp{NFP} are connected to a mother \ac{NFP} placed at an altitude of higher than 5km, i.e., either \ac{MAP} or \ac{HAP} \cite{ahmadi2017novel}.

\acp{NFP} are connected to each other and mother-\ac{NFP} through \ac{FSO} links, where we neglect the \ac{FSO} link losses in this work. \acp{NFP} can share the control information such as bandwidth, data rate and other requirements with each other as well as mother-\ac{NFP}, however, all the data information can only be shared with mother-\ac{NFP}. We assume that the distribution of \acp{SC} and \acp{NFP} does not change for time duration $T$, and thus, we study their association considering the active \acp{SC} and \acp{NFP} during the time interval $\begin{bmatrix} 0 & T \end{bmatrix}$.

\subsection{Air-to-Ground Path Loss Model}\label{sec:ATG_Model}
For the communication between \acp{NFP} and \acp{SC}, we have adopted a widely used \ac{ATG} path loss model presented in \cite{ATGmodel} and \cite{ATG_optDrone1}. The model is based on the proposition supported by the statistical derivation in \cite{ATGmodel} that the \ac{ATG} communications may belong to one of the two propagation groups: i) \ac{LoS} receivers, and ii) \ac{NLoS} receivers. The first propagation group includes the receivers placed in \ac{LoS} or near-\ac{LoS} conditions, however the \ac{NLoS} receivers rely on the coverage via reflections and diffractions only. The radio signals first propagate through the free space incurring \ac{FSPL} and then reaches the receivers either directly (i.e., \ac{LoS} receivers) or incur scattering and shadowing because of man-made structures (i.e., \ac{NLoS} receivers). The two propagation groups result in a path loss (referred as excessive path loss that is additional to \ac{FSPL}) following a Gaussian distribution \cite{ATG_optDrone1}. The considered model in \cite{ATG_optDrone1} deals with its mean value instead of its random behavior.

An important factor of the mean value of the excessive path loss is the probability of \ac{LoS} $\left(P(\text{LoS})\right)$, that depends on the considered environment (such as rural, urban, or others) and the orientation of \acp{NFP} and ground \acp{SC} and it was formulated in \cite{ATGmodel} and \cite{ATG_optDrone1} as
\begin{equation}\label{Prob_LoS}
    P(\text{LoS}) = \frac{1}{1 + \alpha \exp \left\{ -\beta \left( \frac{180}{\pi} \theta - \alpha \right) \right\}}
\end{equation}
where $\alpha$ and $\beta$ are parameters with constant values that depend on the specific environment. The elevation angle from the ground \ac{SC} to the \ac{NFP} is represented by $\theta = \arctan \left( \frac{h_D}{s} \right)$, where $s = \sqrt{ \left( x - x_D \right)^2 + \left( y - y_D \right)^2 }$ denotes the horizontal distance between \ac{SC} and \ac{NFP}. The positions of \acp{SC} and \acp{NFP} in a cartesian coordinate system with respect to the origin are denoted by $\left(x,y\right)$ and $\left(x_D,y_D,h_D\right)$, respectively. The mean path loss is presented as
\begin{equation}\label{PathLoss}
%  \begin{aligned}
    \text{PL}(dB) = 10 \; \log \left( \frac{4 \pi f_c d}{c} \right)^\gamma + P(\text{LoS}) \eta_{\text{LoS}} + P(\text{NLoS}) \eta_{\text{NLoS}}
%  \end{aligned}
\end{equation}
where the first term represents the \ac{FSPL} that depends on carrier frequency $f_c$, speed of light $c$, path loss exponent $\gamma$ and the distance $d=\sqrt{h_D^2+s^2}$ between \ac{NFP} and \ac{SC}. Variables $\eta_{\text{LoS}}$ and $\eta_{\text{NLoS}}$ represent additional losses for \ac{LoS} and \ac{NLoS} links, respectively and $P(\text{NLoS}) = 1 - P(\text{LoS})$. All parameters in \eqref{PathLoss} depend on the environment. It can be noticed from \eqref{PathLoss} that for a known distribution of \acp{SC} and \acp{NFP}, if we fix the PL then we can estimate the geographical area covered by the \ac{NFP} \cite{ATG_optDrone1}.

\subsection{Problem Formulation}\label{sec:Prob_Form}
The communication of the user data between \acp{SC} and the core network depends on the fronthaul link of the \acp{NFP}. An intelligent association and placement of the \acp{NFP} can provide efficient throughput, widespread connectivity and result in a better \ac{QoS}. In this work, we fix the height of the \acp{NFP} and consider a random 2D placement of \acp{NFP} and \acp{SC}. Thus, our main focus is the association aspect of the problem.

Consider the system shown in Fig. \ref{fig:SysMod}, where $N_{SC}$ \acp{SC} are distributed randomly over a square region of area $A_S$. Over the same region, $N_D$ \acp{NFP} are distributed randomly in a horizontal plane at an altitude of $h_D$ above the ground level. Mother-\ac{NFP} is placed at a height greater than $h_D$, so it can have a direct \ac{LoS} connection with the $N_D$ \acp{NFP}. Using a stochastic-geometry approach, the random distribution of both \acp{SC} and \acp{NFP} follows a \emph{Mat\`ern} type-I hard-core process \cite{matern2013} with the same average density of $\lambda$ per $\text{m}^2$ having a minimum separation of $s_{SC}^{\text{min}}$ and $s_D^{\text{min}}$ with their neighbors, respectively. Note that, the average number of \acp{SC} and \acp{NFP} in a given area is equal to their average density multiplied by the size of the area such that $N_{k}^{avg} = \lambda \exp(-\lambda A_{k_{\text{sep}}}) A_S$, where the subscript $k \in \{SC, D\}$ refers to \acp{SC} and \acp{NFP}, respectively. $A_{k_{\text{sep}}}$ is the area of the separation, i.e., $A_{k_{\text{sep}}} = {s_{k}^{\text{min}}}^2$ for a square region. Let us denote the random distribution points of both \acp{SC} and \acp{NFP} as $\left(x_i,\,y_i\right)$ and $(x_{D_j}, y_{D_j}, h_{D_j})$, respectively, where $i \in \Nc = \left\{ 1, \ldots, N_{SC} \right\}$ and $j \in \Mc = \left\{ 1, \ldots, N_D \right\}$.

First of all, before studying the objective of the association problem, the limiting factors (related to the available resources) that affect the communication between \acp{SC} and \acp{NFP} are discussed below. Three major limiting factors have a direct effect on the association of \acp{SC} and \acp{NFP}, namely, backhaul data rate, bandwidth, and the number of links of the \acp{NFP}. Thus, the association between \acp{SC} and \acp{NFP} varies with changes in the limiting factors.

The backhaul link between the core network and the mother-\ac{NFP}, i.e., hop A, limits the maximum allowable data rate of the network, that is referred here as backhaul data rate $R$. This means that the sum of the data rate for all the \ac{NFP} and \ac{SC} pairs cannot exceed the backhaul data rate $R$. Let us denote the requested data rate of $i$-th \ac{SC} associated with $j$-th \ac{NFP} by $r_{ij}$, then this constraint can be written as \eqref{cons1}, where $A_{ij}$ is an entry of an $N_{SC} \times N_D$ association matrix $\Am$ that shows the association of \acp{SC} and \acp{NFP} as
\begin{equation}\label{Assoc_Mat}
      A_{ij} =
        \left\{
          \begin{array}{ll}
            1, & \hbox{if $i$-th SC is connected with $j$-th \ac{NFP},} \\
            0, & \hbox{otherwise.}
          \end{array}
        \right.
\end{equation}

The next limitation is posed by the fronthaul \ac{FSO} link in the hop B, i.e., from mother-\ac{NFP} to each \ac{NFP}. Depending upon the quality of the \ac{FSO} link, the $j$-th \ac{NFP} is allocated a maximum bandwidth, $B_j$, that can be distributed among associated \acp{SC}. This limits the sum of requested bandwidth of all \acp{SC} associated with $j$-th \ac{NFP} and it can be mathematically represented as \eqref{cons2}. The allocated bandwidth $b_{ij} = \frac{r_{ij}}{\eta_{ij}}$ of the $i$-th \ac{SC} and the $j$-th \ac{NFP} pair depends on $r_{ij}$ and the spectral efficiency $\eta_{ij} =$ $\log_2 \left( 1+\text{SINR}_{ij} \right)$, where the SINR can be expressed as
\begin{equation}\label{eq:SINR}
      \text{SINR}_{ik} = \frac{P_{r_{ik}}}{\sum_{j=1,j \neq k}^{N_D} P_{r_{ij}} + \sigma}
\end{equation}
Here, $P_{r_{ij}}$ represents the received power from the $j$-th \ac{NFP} to the $i$-th \ac{SC} and $\sigma$ represents the noise floor of each receiver.

In the next hop, i.e., hop C between the $j$-th \ac{NFP} and the $i$-th \ac{SC}, the RF fronthaul link should satisfy a \ac{QoS} requirement. Every \ac{NFP} can serve \acp{SC} placed inside a specific area computed using \eqref{PathLoss} for fixed positions of \acp{NFP}, \acp{SC} and a maximum path loss \cite{ATG_optDrone1} and \cite{TwoDrones}. This maximum path loss is dictated by the minimum required SINR to serve a \ac{SC} via RF link. Thus, each \ac{NFP}-\ac{SC} pair link should satisfy a minimum SINR \ac{QoS} requirement that can be written as \eqref{cons3}.

Considering all the above mentioned constraints, for fixed positions of \acp{NFP} and \acp{SC}, Our objective is to find the best possible association of the \acp{SC} with the \acp{NFP} such that the sum data rate of the overall system is maximized. Such a problem can be formulated as
\begin{subequations}
\label{eq:Opt_Prob}
\begin{alignat}{3}
    \max_{\{A_{ij}\}} \quad \sum_{i=1}^{N_{SC}} &\sum_{j=1}^{N_D} r_{ij} \cdot A_{ij} \label{Obj_Fun}\\
    \intertext{subject to}
    \sum_{i=1}^{N_{SC}} \sum_{j=1}^{N_D} r_{ij} \cdot A_{ij} \; &\leq \; R, \label{cons1}\\
    \sum_{i=1}^{N_{SC}} b_{ij} \cdot A_{ij} \; &\leq \; B_j, && \quad \text{for } j \in \Mc, \label{cons2}\\
    \frac{1}{\text{SINR}_{ij}} \cdot A_{ij} \; &\leq \; \frac{1}{\text{SINR}_{\text{min}}}, && \quad \text{for } i \in \Nc \; \& \; j \in \Mc, \label{cons3}\\
    \sum_{i=1}^{N_{SC}} A_{ij} \; &\leq \; N_{l_j}, && \quad \text{for } j \in \Mc, \label{cons4}\\
    \sum_{j=1}^{N_D} A_{ij} \; &\leq \; 1, && \quad \text{for } i \in \Nc, \label{cons5}\\
    A_{ij} \; &\in \; \left\{ 0, 1 \right\}, && \quad \text{for } i \in \Nc \; \& \; j \in \Mc. \label{cons6}
\end{alignat}
\end{subequations}
Constraint \eqref{cons4} shows that the $j$-th \ac{NFP} can establish a maximum of $N_{l_j}$ links with the \acp{SC} as per the number of transceivers. Further, each \ac{SC} can be associated to a maximum of one \ac{NFP} that is included in constraint \eqref{cons5}.

\section{Problem Approximation and Upper Bounds}\label{sec:Prob_Approx}
This section presents the analysis of the association problem \eqref{eq:Opt_Prob}. First of all, we show that if some of the constraints of the problem in \eqref{eq:Opt_Prob} are relaxed, then it exactly maps to the \ac{GAP}. Then, using the relation with \ac{GAP}, it is shown that problem \eqref{eq:Opt_Prob} is at least NP-hard. Further, an analysis for the upper bound of the association problem \eqref{eq:Opt_Prob} without constraints \eqref{cons1} and \eqref{cons4} is presented. Furthermore, we study the \ac{LP} relaxation of the association problem \eqref{eq:Opt_Prob} to obtain another upper bound. In addition, to get a tighter upper bound, a B\&B method is used for the association problem \eqref{eq:Opt_Prob} without neglecting any constraints. Finally, the bounds and relaxed solutions are numerically compared to one another, as well as to those proposed in the next section.

\subsection{Relation to the \ac{GAP}}\label{sec:GAP_relevace}
Here, first of all, we define the \ac{GAP} and then show its relevance with the association problem \eqref{eq:Opt_Prob}. Consider $n$ tasks to be assigned to $m$ agents, where $j$-th agent can complete $i$-th task as per its own capability. This means that the $j$-th agent can complete the $i$-th task with a cost/weight $w_{ij}$ that then returns a utility/profit $p_{ij}$. Thus, the weights and profits are dependent on the $j$-th agent for the $i$-th task. The objective is to maximize the overall profit by assigning each task to exactly one agent without exceeding the capacity of the $j$-th agent, $c_j$. Such a problem is known as \ac{GAP} \cite{martello1990knapsack} and can be written as
\begin{subequations}
\label{eq:GAP}
\begin{alignat}{3}
    \max_{\{x_{ij}\}} \quad &\sum_{i=1}^{n} \sum_{j=1}^{m} p_{ij} \cdot x_{ij} \label{GAP1}\\
    \intertext{subject to}
    &\sum_{i=1}^{n} w_{ij} \cdot x_{ij} \; \leq \; c_j, && \quad \text{for } j \in \Mc, \label{GAP2}\\
    &\sum_{j=1}^{m} x_{ij} \; = \; 1, && \quad \text{for } i \in \Nc, \label{GAP3}\\
    &x_{ij} \; \in \; \left\{ 0, 1 \right\}, && \quad \text{for } i \in \Nc \; \& \; j \in \Mc, \label{GAP4}
\end{alignat}
\end{subequations}
where
\begin{equation}\label{GAP_var}
      x_{ij} =
        \left\{
          \begin{array}{ll}
            1, & \hbox{if $i$-th task is assigned to the $j$-th agent,} \\
            0, & \hbox{otherwise.}
          \end{array}
        \right.
\end{equation}
and the following are the usual restrictions on the \ac{GAP} variables
\begin{align}
  &p_{ij}, \; w_{ij} \; \& \; c_j \text{ are positive integers}, \label{GAP_rest1} \\
  & | \left\{ j: w_{ij} \leq c_j \right\}| \geq 1 \quad \text{for } i \in \Nc, \label{GAP_rest2} \\
  & c_j \geq \min_{i \in \Nc} \left\{ w_{ij} \right\} \qquad \quad \text{for } j \in \Mc. \label{GAP_rest3}
\end{align}
If the weights are fractional, thus condition \eqref{GAP_rest1} is violated, then it can be handled by multiplication of weights with a proper factor. If $i$-th task does not satisfy condition \eqref{GAP_rest2}, then that task cannot be assigned to any agent and \ac{GAP} instance is infeasible. However, another variant of \ac{GAP}, known as LEGAP always admits a feasible solution as in its definition the equality in constraint \eqref{GAP3} is replaced with an inequality, such that $\sum_{j=1}^{m} x_{ij} \leq 1, \forall i \in \Nc$, that allows $i$-th agent to be un-associated under certain conditions \cite{martello1990knapsack}. The agents that violates condition \eqref{GAP_rest3} can be removed from the problem.

Comparison of \ac{GAP} in \eqref{eq:GAP} with the association problem \eqref{eq:Opt_Prob} shows that if the constraints \eqref{cons1}, \eqref{cons3} and \eqref{cons4} are neglected, then the resulting relaxed association problem is equivalent to \ac{GAP}. Note that, the association problem \eqref{eq:Opt_Prob} without constraints \eqref{cons1} and \eqref{cons4} will be referred to as the relaxed association problem. Below, we show the relevance of the relaxed association problem with \ac{GAP} and then discuss the effect of neglecting constraints \eqref{cons1} and \eqref{cons4}. Later on, we also incorporate the constraint \eqref{cons3} in the derivation of the upper bound of the association problem \eqref{eq:Opt_Prob}.

The objective variables $x_{ij}$ in \eqref{GAP1} and $A_{ij}$ in \eqref{Obj_Fun} exactly match each other, where the subscripts $i$ denoting tasks in \eqref{GAP1} is equivalent to \acp{SC} in \eqref{Obj_Fun} and $j$ denoting agents in \eqref{GAP1} is equivalent to \acp{NFP} in \eqref{Obj_Fun}. This means that assigning $i$-th task to $j$-th agent is the same as associating the $i$-th \ac{SC} to the $j$-th \ac{NFP}. The variables $p_{ij}$, $w_{ij}$ and $c_j$ in \eqref{eq:GAP} depend on both tasks and agents, which is the same as of $r_{ij}$, $b_{ij}$ and $B_j$ in association problem \eqref{eq:Opt_Prob} that depend on both \acp{SC} and \acp{NFP}; these variables are equivalent, respectively. The constraint \eqref{cons2} that keeps track of the bandwidth limit of the $j$-th \ac{NFP} is the same as of constraint \eqref{GAP2} that tracks the maximum capacity of $j$-th agent, thus they are equivalent as well. The variables in association problem \eqref{eq:Opt_Prob} satisfy the restrictions of the \ac{GAP} variables given in \eqref{GAP_rest1} to \eqref{GAP_rest3}. For some cases, if the variables in association problem \eqref{eq:Opt_Prob} are in fractions then they can be converted to integers with a multiplication of an appropriate factor.

The above discussion shows that the relaxed association problem \eqref{eq:Opt_Prob} without constraints \eqref{cons1} and \eqref{cons4} is equivalent to \ac{GAP}.

\subsection{Complexity of the Association Problem}\label{sec:NPhard}
The relevance of association problem \eqref{eq:Opt_Prob} with the \ac{GAP} can be exploited to study its complexity.

\begin{prop}
  Association problem \eqref{eq:Opt_Prob} is at least NP-hard.
\end{prop}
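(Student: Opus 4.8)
The plan is to prove NP-hardness by constructing a polynomial-time reduction from \ac{GAP}, which is a classical NP-hard problem \cite{martello1990knapsack}, to problem \eqref{eq:Opt_Prob}. Since Section \ref{sec:GAP_relevace} has already established that dropping constraints \eqref{cons1}, \eqref{cons3} and \eqref{cons4} turns \eqref{eq:Opt_Prob} into an instance of \ac{GAP}, the natural strategy is to run this correspondence in reverse: I would show that an arbitrary \ac{GAP} instance can be embedded into the full problem \eqref{eq:Opt_Prob} by choosing its remaining parameters so that the three extra constraints become non-binding. A polynomial-time algorithm for \eqref{eq:Opt_Prob} would then solve \ac{GAP} in polynomial time, which forces \eqref{eq:Opt_Prob} to be at least NP-hard.

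First I would fix the correspondence of Section \ref{sec:GAP_relevace}: given a \ac{GAP} instance with profits $p_{ij}$, weights $w_{ij}$ and capacities $c_j$, set $r_{ij} = p_{ij}$, $b_{ij} = w_{ij}$ and $B_j = c_j$, so that the objective \eqref{Obj_Fun} and the bandwidth constraint \eqref{cons2} reproduce \eqref{GAP1} and \eqref{GAP2} verbatim. Next I would neutralize the three remaining constraints. Taking $R \geq \sum_{i=1}^{N_{SC}} \max_j r_{ij}$ makes the backhaul bound \eqref{cons1} slack for every feasible association, because \eqref{cons5} already permits at most one \ac{NFP} per \ac{SC}; putting $N_{l_j} = N_{SC}$ for every $j$ makes the link-count constraint \eqref{cons4} vacuous; and choosing $\text{SINR}_{\text{min}}$ no larger than $\min_{i,j} \text{SINR}_{ij}$ makes the \ac{QoS} constraint \eqref{cons3} hold for every pair. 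With these assignments the feasible set and objective of \eqref{eq:Opt_Prob} coincide exactly with those of the given \ac{GAP} instance, and since each chosen parameter is a sum or minimum of input quantities the whole construction is computable in polynomial time.

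The point I would treat with the most care is the assignment constraint, and this is where I expect the main obstacle to lie. Problem \eqref{eq:Opt_Prob} enforces the inequality \eqref{cons5}, $\sum_j A_{ij} \leq 1$, whereas the \ac{GAP} of \eqref{eq:GAP} imposes the equality \eqref{GAP3}; strictly speaking the embedding therefore lands in the inequality variant LEGAP introduced above rather than in equality-\ac{GAP}. I would resolve this by reducing from LEGAP directly, noting that LEGAP is itself NP-hard since its single-agent case ($m=1$) is exactly the 0-1 knapsack problem. The remaining routine check is that the slackening choices above are genuinely non-binding and polynomially bounded in the encoding length; once that is confirmed, the reduction \ac{GAP} (equivalently LEGAP) $\leq_p$ \eqref{eq:Opt_Prob} shows that \eqref{eq:Opt_Prob} is at least as hard as an NP-hard problem, completing the proof.
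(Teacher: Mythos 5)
Your proof is correct and rests on the same underlying idea as the paper's --- the equivalence with the \ac{GAP} established in Section \ref{sec:GAP_relevace} --- but your execution is noticeably more rigorous than the paper's own argument. The paper simply asserts that since the association problem is a ``subset'' of the relaxed association problem and the latter is NP-hard, the former must be at least NP-hard; taken literally this inference is not valid, since adding constraints to an NP-hard problem can in principle make it easier rather than harder. What actually justifies the claim is precisely your restriction argument: an arbitrary \ac{GAP} (or LEGAP) instance embeds into the full problem \eqref{eq:Opt_Prob} by choosing $R$, $N_{l_j}$ and $\text{SINR}_{\text{min}}$ so that constraints \eqref{cons1}, \eqref{cons3} and \eqref{cons4} are non-binding, yielding a polynomial-time reduction. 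You also catch a second point the paper glosses over, namely that \eqref{cons5} is an inequality while \eqref{GAP3} is an equality, so the embedding really targets LEGAP; your observation that LEGAP is itself NP-hard (its single-agent case is 0-1 knapsack) closes that gap cleanly. In short, your proof proves the same proposition by the same route but supplies the reduction and the LEGAP fix that the paper's two-line proof leaves implicit.
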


\begin{proof}
  As association problem \eqref{eq:Opt_Prob} is a subset of the relaxed association problem, thus to show that the association problem \eqref{eq:Opt_Prob} is at least NP-Hard, it would be enough to show that the relaxed association problem is NP-hard. It is shown in section \ref{sec:GAP_relevace} that the relaxed association problem is equivalent to \ac{GAP}, where \ac{GAP} is an NP-hard problem \cite{fisher1986}. Thus, as \ac{GAP} is NP-hard, so the equivalent relaxed association problem is also NP-hard. This shows that the association problem \eqref{eq:Opt_Prob}, which is a subset of the relaxed association problem, must be at least NP-hard.
\end{proof}

\subsection{Upper Bound of the Optimization Problem}\label{sec:UB}
\begin{remk}\label{lem1}
  Neglecting constraints \eqref{cons1} and \eqref{cons4} of the association problem \eqref{eq:Opt_Prob} results in a relaxed upper bound that is not lower than the original optimal objective function in \eqref{eq:Opt_Prob}.
\end{remk}

\begin{proof}
Knowing that problem \eqref{eq:Opt_Prob} is a maximization problem, enlarging the feasible set by removing constraints can only increase the objective.
\end{proof}

%We need to show that neglecting constraints \eqref{cons1} and \eqref{cons4}, the value of the objective function \eqref{Obj_Fun} increases, that will result in a relaxed/higher upper bound. If we remove the constraint \eqref{cons1} then there will be no restriction on maximum available data rate $R$ on the overall system. This gives us liberty to associate more \acp{SC} with \acp{NFP} as there will be no backhaul data rate restriction. Thus, neglecting constraint \eqref{cons1} increases the objective value and results in a higher upper bound. Similarly, if the \acp{NFP} are able to make connections with as many \acp{SC} as possible without taking care of the available resources in terms of number of links $N_{l_j}$, so we can neglect constraint \eqref{cons4}. This will further allow association of more \acp{SC} with \acp{NFP}. Thus, it further increases the objective value \eqref{Obj_Fun} and results in a further relaxed/higher upper bound. The original constraint set is a subset of the wider relaxed constraint set, and hence, it follows that the relaxed maximization problem yields an optimal objective function value that is greater than or equal to that of the original problem. This shows that the relaxed association problem has a higher/relaxed upper bound as compared to association problem \eqref{eq:Opt_Prob}.

As per Remark \ref{lem1}, the relaxed association problem provides a higher sum data rate than the association problem \eqref{eq:Opt_Prob}. Thus, the upper bound of the relaxed association problem can be regarded as the upper bound for the association problem \eqref{eq:Opt_Prob}. Considering the relevance of our relaxed association problem with the \ac{GAP} as shown in section \ref{sec:GAP_relevace} and using the capacity relaxation procedure adopted in \cite{rossBound}, in the following, we derive an upper bound for the relaxed association problem.

In the relaxed association problem, the bandwidth constraint \eqref{cons2} that is equivalent to the capacity constraint \eqref{GAP2} of \ac{GAP} is relaxed such that
\begin{equation}\label{RelBound1}
      b_{ij} \cdot A_{ij} \; \leq \; B_j, \quad \text{for } i \in \Nc \; \& \; j \in \Mc
\end{equation}
Now, we are relaxing \eqref{cons2}, unlike what is stated in Remark \ref{lem1}. After the above relaxation, the resulting problem has an optimal solution $\hat{\Am}$ that is obtained by determining $j$-th \ac{NFP} for $i$-th \ac{SC} such that
\begin{equation}\label{RelBound2}
  j(i) = \arg \max \left\{ r_{ij}: j \in \Mc, b_{ij} \leq B_j, \; \text{SINR}_{ij} \geq \text{SINR}_{\text{min}} \right\} \quad \text{for } i \in \Nc
\end{equation}
%\begin{multline}\label{RelBound2}
%  j(i) = \arg \max \left\{ r_{ij}: j \in \Mc, b_{ij} \leq B_j \; \& \right. \\
%   \left. \text{SINR}_{ij} \geq \text{SINR}_{\text{min}} \right\} \quad \text{for } i \in \Nc
%\end{multline}
and setting $\hat{A}_{i,j(i)} = 1$ and $\hat{A}_{ij} = 0$ for all $j \in \Mc\backslash\{ j(i) \}$. This results in an upper bound
\begin{equation}\label{RelBound3}
      U_0 = \sum_{i=1}^{N_{SC}} r_{i,j(i)}
\end{equation}
which is then improved as follows. Consider $\Lc_j$ to be the list for the $j$-th \ac{NFP} that consists of \acp{SC} associated with it and $O_j$ to be the overload indicator for the $j$-th \ac{NFP} such that $O_j > 0$ shows that the $j$-th \ac{NFP} has exceeded the bandwidth limit $B_j$ and the list and indicator are defined as
\begin{alignat}{3}
  \Lc_j &= \left\{ i: \hat{A}_{ij} = 1 \right\}, \quad &&\text{for } j \in \Mc \label{RelBound4} \\
  O_j &= \sum_{i \in \Lc_j} b_{ij} - B_j, \quad &&\text{for } j \in \Mc \label{RelBound5}
\end{alignat}
Let us define a set $\acute{\Mc}$ consisting of those \acp{NFP} for which the relaxed constraint \eqref{cons2} is violated and consider $\acute{\Lc}$ to be the list of \acp{SC} associated with those \acp{NFP} that violates constraint \eqref{cons2}. As per the definitions, these sets can be written as
\begin{alignat}{3}
  \acute{\Mc} &= \left\{ j: O_j > 0 \right\} \label{RelBound6} \\
  \acute{\Lc} &= \bigcup_{j \in \acute{\Mc}} \Lc_j \label{RelBound7}
\end{alignat}
If the $i$-th \ac{SC} that is currently associated with the $j$-th \ac{NFP} is reassigned to the other \ac{NFP} such that it results in second maximum data rate, then the resulting minimum penalty is given as
\begin{equation}\label{RelBound8}
  q_i = r_{i,j(i)} - \max \left\{ r_{ij}: j \in \Mc \setminus j(i), b_{ij} \leq B_j, \; \text{SINR}_{ij} \geq \text{SINR}_{\text{min}} \right\}, \quad \text{for } i \in \acute{\Lc}
\end{equation}
%\begin{multline}\label{RelBound8}
%      q_i = r_{i,j(i)} - \max \left\{ r_{ij}: j \in \Mc \setminus j(i), b_{ij} \leq B_j \; \& \right. \\
%      \left. \text{SINR}_{ij} \geq \text{SINR}_{\text{min}} \right\}, \quad \text{for } i \in \acute{\Lc}
%\end{multline}
This results in a lower bound on the maximum achievable sum data rate in order to satisfy constraint \eqref{cons2}, due to penalty $q_i$. Now, for each \ac{NFP} $j \in \acute{\Mc}$, the objective is to minimize the reassignment penalty by solving the 0-1 single knapsack problem \cite{martello1990knapsack} that can be written as
\begin{subequations}
\label{eq:KP}
\begin{alignat}{3}
    \min v_j = &\sum_{i \in \Lc_j} q_i \; y_{ij} \label{KP1}\\
    \intertext{subject to}
    &\sum_{i \in \Lc_j} b_{ij} \; y_{ij} \geq O_j, \label{KP2}\\
    &y_{ij} \; \in \; \left\{ 0, 1 \right\}, && \quad \text{for } i \in \Lc_j, \label{KP3}
\end{alignat}
\end{subequations}
where $y_{ij} = 1$ if and only if the $i$-th \ac{SC} is dissociated from the $j$-th \ac{NFP}. The resulting bound is thus
\begin{equation}\label{RelBound_Fin}
      U_1 = U_0 - \sum_{j \in \acute{\Mc}} v_j.
\end{equation}
Using the bound $U_1$, we follow a B\&B method presented in \cite{rossBound}, where each branch is bounded by $U_1$. This will provide a solution of the relaxed association problem that can be used as an upper bound of the association problem \eqref{eq:Opt_Prob}.

\subsection{\ac{LP} relaxation and bound}\label{sec:LP_bound}
In this section, we will use the \ac{LP} relaxation on the association problem \eqref{eq:Opt_Prob}. This means that, we will relax the binary constraint on the association matrix $\Am$ as defined in \eqref{cons6}. So, now the entries of the association matrix can vary between 0 and 1 such that $0 \leq A_{ij} \leq 1$. Note that, this will each \ac{SC} to be associated with multiple \acp{NFP}, that will result in relaxation of the constraint \eqref{cons5}. Note that, constraint \eqref{cons5} along with the binary constraint \eqref{cons6} previously restricted each \ac{SC} to be associated to a maximum of one \ac{NFP} only. However, such a relaxation allows us to solve the association problem using convex programming tools. Such a solution can be regarded as a bound of the optimization problem \eqref{eq:Opt_Prob}, which considers all the constraints except the constraint \eqref{cons6}, i.e., binary condition over the association matrix.

\section{Proposed Solution}\label{sec:Prop_Sol}
It is shown in section \ref{sec:NPhard} that the association problem \eqref{eq:Opt_Prob} is at least NP-hard. It is well known that there exists no standard method to solve such an NP-hard problem \cite{NPhard1} and \cite{NPhard2}. Therefore, we have presented two simple bounds of the problem to show the closeness of our proposed greedy solutions with the bounds. Further, we use B\&B method to get the exact solution of the association problem \eqref{eq:Opt_Prob}, where B\&B being an exhaustive search is considered here as a benchmark solution of the association problem \eqref{eq:Opt_Prob}. We call it an exact solution as it does not involve any relaxation of the constraints as compared to the solution bounds presented in sections \ref{sec:UB} and \ref{sec:LP_bound}.

To get an efficient and less computationally-complex solution of the NP-hard association problem \eqref{eq:Opt_Prob}, we present here two greedy solutions that are designed to solve \eqref{eq:Opt_Prob} without relaxing any constraints. One of them is designed for the case where the processing power of \acp{SC} and \acp{NFP} is utilized and is named as Modified Distributed Maximal Demand Minimum Servers (M(DM)$^2$S) algorithm. The other greedy solution is designed for the case of \ac{C-RAN} architecture, where \acp{SC} and \acp{NFP} lack the processing power, and thus, the algorithm runs at the mother-\ac{NFP} or baseband unit (BBU) pool. Therefore, it is named as Centralized Maximal Demand Minimum Servers (CMDMS) algorithm.

Since our scope of work does not focus on finding an optimal 3D placement of \acp{NFP}, the random positions of \acp{SC} and \acp{NFP} are given as a realization of the \emph{Mat\'ern} type-I process as indicated earlier. First of all, we present a system initialization algorithm that requires some of the known system parameters as an input and provide a random distribution of \acp{SC} and \acp{NFP} in a specified rectangular region of area $A_S$. Using this we can generate a number of random scenarios with varying positions of \acp{SC} and \acp{NFP} to evaluate the association algorithms under differen scenarios.

\subsection{System Initialization}\label{sec:Init_Sys}
The algorithm \ref{algo:Intial} mainly consists of two steps. One of which deals with the the \acp{SC}, while the other accounts for the distribution of the \acp{NFP}. For the distribution of the \acp{SC}, as a system parameter, the algorithm needs to know about the rectangular area, $A_S$, average density of \acp{SC}, $\lambda$ per m$^2$, the minimum separation between them, $s_{SC}^{\text{min}}$ in meters, and the number of \acp{SC}, $N_{SC}$. Using these parameters, \acp{SC} are distributed randomly using \emph{Mat\`ern} type-I hard-core process. This provides $N_{SC}^{\text{avg}} = \lambda \exp(-\lambda {s_{SC}^{\text{min}}}^2) A_S$ average number of random distribution points in the rectangular area, $A_S$. Finally, we pick $N_{SC}$ points out of the generated points that provides the 2D locations of $N_{SC}$ \acp{SC} as ($x_i, y_i$), where $i \in \Nc$.

For the \acp{NFP}, we assume symmetry in the case of number of links and bandwidth such that each \ac{NFP} can support a fixed number of links $N_l$ and fixed maximum bandwidth $B$, i.e., $N_{l_j} = N_l$ and $B_j = B$ $\forall$ $j \in \Mc$. The algorithm \ref{algo:Intial} either has the information of the number of \acp{NFP}, $N_D$ as an input parameter or it can be computed as follows. To compute the minimum number of required \acp{NFP}, the algorithm uses the input information $B, b_i$ and $N_l$ and computes the maximum number of \acp{SC} that can be associated with a single \ac{NFP}, $N_{SC}^D$ depending on the bandwidth information such that
\begin{equation}\label{NumDrones1}
  N_{SC}^D = \lfloor \frac{B}{b_{\text{avg}}} \rfloor,
\end{equation}
where $b_{\text{avg}} = \frac{1}{N_{SC}} \sum_{i \in \Nc} b_i$ is the average bandwidth required by a \ac{SC}. Now, the minimum number of required \acp{NFP} is computed as
\begin{equation}\label{NumDrones2}
  N_D = \lceil \frac{N_{SC}}{\min\{N_l,N_{SC}^{D}\}} \rceil.
\end{equation}
Now, again we use another symmetry for the \acp{NFP} with the assumption that the height of every \ac{NFP} is fixed to a maximum defined height $h_{\text{max}}$ such that $h_{D_j} = h_{\text{max}}$, where $h_{\text{max}}$ is obtained by the initialization algorithm as an input parameter. Next, using $h_{\text{max}}$ and considering a fixed maximum path loss $\text{PL}_\text{max}$ in \eqref{PathLoss}, we can compute $s$ that corresponds to the maximum distance covered by $j$-th \ac{NFP}. This maximum distance is equivalent to the required minimum separation between \acp{NFP} $s_D^{\text{min}}$. Finally, we distribute the \acp{NFP} using \emph{Mat\`ern} type-I hard-core process with all the parameters same as of the distribution of \acp{SC} except the separation distance being equal to $s_D^{\text{min}}$. This provides 3D locations of the \acp{NFP} as ($x_{D_j}, y_{D_j}, h_D$). The above mentioned procedure is summarized in Algorithm \ref{algo:Intial}.

\begin{varalgorithm}{Initialization}
\renewcommand{\baselinestretch}{1.1}
\small
\caption{System Initialization}
\label{algo:Intial}
\begin{algorithmic}[1]
    \REQUIRE $\lambda, \; A_S, \; s_{BS}^{\text{min}}, \; h_{\text{max}}, \; \text{PL}_{\text{max}}, \; \alpha, \; \beta, \; \eta_{\text{LoS}}, \; \eta_{\text{NLoS}}, N_{SC}, \; N_D$
    \ENSURE $\left(x_i,\, y_i\right), \; (x_{D_j}, y_{D_j}, h_D)$
    \STATE \textbf{Distribution of \acp{SC}:}
    \STATE $(x_i, y_i) \leftarrow$ \emph{Mat\`ern} Process$(A_S, \lambda, s_{SC}^{\text{min}})$
    \STATE $(x_i, y_i) \leftarrow$ Randomly select $N_{SC}$ number of points out of $(x_i, y_i)$
    \STATE \textbf{Distribution of \acp{NFP}:}
    \STATE Compute $s_D^{\text{min}}$ using \eqref{Prob_LoS}, \eqref{PathLoss}, $PL_{\text{max}}, \alpha, \beta, \eta_{\text{LoS}}, \eta_{\text{NLoS}}$
    \IF {$N_D = 0 $}
        \STATE Compute $N_D$ using \eqref{NumDrones2}
    \ENDIF
    \STATE $(x_{D_j}, y_{D_j}) \leftarrow$ \emph{Mat\`ern} Process$(A_S, \lambda, s_D^{\text{min}})$
    \STATE $(x_{D_j}, y_{D_j}) \leftarrow N_D$ points out of $(x_{D_j}, y_{D_j})$ and $h_{D_j} = h_{\text{max}}$
\end{algorithmic}
\end{varalgorithm}

Next, the SINR parameter \eqref{eq:SINR} for each pair of \ac{SC} and \ac{NFP} is computed using a snapshot of the above distribution of \acp{SC} and \acp{NFP} combined with the bandwidth and data rate requirements of \acp{SC}, i.e., $b_{ij}$ and $r_{ij}$, respectively. All this information is then passed to the below presented algorithms to find the association of \acp{SC} and \acp{NFP} by solving the association problem \eqref{eq:Opt_Prob}.

\subsection{Modified Distributed Maximal Demand Minimum Servers Algorithm}\label{sec:DMDMSA}
This algorithm is designed to use the processing power of three network nodes including \acp{SC}, \acp{NFP} and mother-\ac{NFP}. Therefore, it is divided into three steps that are distributed among those three nodes, i.e., first step at \acp{SC}, second step at \acp{NFP} and third step at mother-\ac{NFP}. Mainly, the second step that is divided among \acp{NFP} speeds up the optimization process. Each step takes care of one or more constraints of the association problem \eqref{eq:Opt_Prob}.

\subsubsection{\textbf{Step 1}}\label{sec:Step1Sol}
This step is performed at the $i$-th \ac{SC} individually. The $i$-th \ac{SC} uses SINR parameter from \eqref{eq:SINR} and compares it with the minimum SINR requirement $\text{SINR}_\text{min}$ satisfying constraint \eqref{cons3}. This provides a list of possible pairs with \acp{NFP} for the $i$-th \ac{SC}, $i \in \Nc$. Out of its list, the $i$-th \ac{SC} picks the $j$-th \ac{NFP}, $j \in \Mc$ that results in maximum value of the decision ratio $r_{ij}/b_{ij}$ and sends the association request to only the selected \ac{NFP}. As each \ac{SC} selects only one \ac{NFP}, this procedure takes care of the constraint \eqref{cons5}. Note that the decision ratio is designed keeping in view the objective function \eqref{Obj_Fun} and constraint \eqref{cons2}, where we want to maximize the data rate $r_{ij}$ and minimize the bandwidth $b_{ij}$\footnote{This decision ratio is inspired by the optimal solution for the Knapsack problem, which has similarities to our problem.}.

\subsubsection{\textbf{Step 2}}\label{sec:Step2Sol}
At this step, the $j$-th \ac{NFP} uses information about its maximum number of supported links and bandwidth, and initializes counters $C_{N_l}^j = 0$ and $C_b^j = 0$. The $j$-th \ac{NFP}, $j \in \Mc$ receives a number of association requests from a group of \acp{SC}. The $j$-th \ac{NFP} goes through its own list of association requests and selects, one by one in an ascending order, the \acp{SC} resulting in the maximum decision ratio $r_{ij}/b_{ij}$ till the end of the list, or till it reaches its bandwidth/links capacity, whichever occurs earlier. Before associating the selected request of the $i$-th \ac{SC}. The $j$-th \ac{NFP} verifies the constraints for the maximum number of supported links $N_l$ and maximum bandwidth $B$ as follows. \ac{NFP} verifies if it has remaining resources to serve the selected request of \ac{SC} i.e,. $C_{N_l}^j < N_l$ and $C_b^j + b_{ij} \leq B$. If those two conditions are satisfied then it associates the $i$-th \ac{SC} and updates its respective association entry $A_{ij}$ and related counters as $C_{N_l} = C_{N_l} + 1$ and $C_b^j = C_b^j + b_{ij}$. If any of the two limits including the number of links and the bandwidth for the $j$-th \ac{NFP} is reached, i.e., $C_{N_l}^j \nless N_l$ and $C_b^j \nless B$, then the association process for this \ac{NFP} ends. Thus, at this step, the $j$-th \ac{NFP} takes care of the two constraints including the maximum number of links, i.e., \eqref{cons4} and the maximum bandwidth, i.e., \eqref{cons2}. Furthermore, in case if all the requests of \acp{SC} are entertained already and no further request is remaining for the $j$-th \ac{NFP}, then the process at this step completes for the $j$-th \ac{NFP}.

This step is designed to use the processing power of \acp{NFP} and further it is distributed among them in such a way so it can be performed in parallel. This distribution and parallel processing speeds up the overall association process. Also, note that until this step, we have satisfied constraints \eqref{cons2} to \eqref{cons6} only. We have also used the information of constraint \eqref{cons1} but has not verified it yet, as all the information is distributed between \acp{NFP} and \acp{SC}, so there is no way to collectively keep track of the combined data rate information.

\subsubsection{\textbf{Step 3}}\label{sec:Step3Sol}
All of the information at step 2 is shared with the mother-\ac{NFP}. Mother-\ac{NFP} generates the association matrix $\Am$, where $j$-th column has $N_l$ number of ones at maximum. It initializes the data rate counter $C_r$ with the currently assigned total data rate of the associated \acp{SC} and keep track of the sum data rate as per the association matrix. Thus, at this step, mother-\ac{NFP} verifies the constraint \eqref{cons1} as follows.

If the backhaul data rate limit is not reached yet, i.e., $C_r < R$, then mother-\ac{NFP} goes through the association matrix to look for the \acp{SC} not associated with any \ac{NFP}. For those remaining \acp{SC}, mother-\ac{NFP} creates a list of possible \ac{NFP}-\ac{SC} pairs. Out of the list, the \ac{NFP}-\ac{SC} pairs not satisfying the SINR constraint \eqref{cons3} are discarded out of the list. Then, mother-\ac{NFP} selects the \ac{NFP}-\ac{SC} pair with maximum decision ratio $r_{ij}/b_{ij}$. For the selected $j$-th \ac{NFP}, mother-\ac{NFP} first verifies the number of links and bandwidth resources, i.e., $C_{N_l}^j < N_l$ and $C_b^j < B$. In case either of the number of links or bandwidth limits have been reached, all the respective \ac{NFP}-\ac{SC} pairs of the $j$-th \ac{NFP} are discarded from the list. Otherwise, the backhaul data rate and bandwith constraints, i.e., \eqref{cons1} and \eqref{cons2}, are verified for the selected \ac{NFP}-\ac{SC} pair such that $C_r + r_{ij} \leq R$ and $C_b^j + b_{ij} \leq B$. If the constraints are verified then mother-\ac{NFP} associates the \ac{NFP}-\ac{SC} pair and updates the association matrix $\Am$, and the related counters such that $C_{N_l}^j = C_{N_l}^j + 1$, $C_b^j = C_b^j + b_{ij}$ and $C_r = C_r + r_{ij}$. Also, for the selected \ac{SC} associated with the $k$-th \ac{NFP}, other possible links are neglected, i.e., links with the $j$-th \acp{NFP}, $j \in \Mc$, where $j \neq k$ are discarded from the list to satisfy constraint \eqref{cons5}. Then, mother-\ac{NFP} selects the next \ac{NFP}-\ac{SC} pair resulting in next maximum decision ratio and keeps associating the remaining \acp{SC} until the resources are fully utilized or all of the \acp{SC} gets associated.

The other case that needs to be checked is when the backhaul data rate limit has been exceeded, i.e., $C_r > R$. This may happen due to the distributed nature of the algorithm, as until step 2 there is no centralized tracking of the sum data rate for all of the \ac{NFP}-\ac{SC} associations. For this case, mother-\ac{NFP} goes through the association matrix $\Am$ to find out the associated \ac{NFP}-\ac{SC} pairs. Out of those pairs, it selects the one that results in minimum value of data rate $r_{ij}$. Then, it disassociates the selected pair and updates the association matrix index $A_{ij}$ and the counters as $C_{N_l}^j = C_{N_l}^j - 1$, $C_b^j = C_b^j - b_{ij}$ and $C_r = C_r - r_{ij}$. Same procedure is followed until the backhaul data rate limit is satisfied.

Throughout this algorithm, priority is given to the \ac{NFP}-\ac{SC} pairs resulting in maximum decision ratio which means that the algorithm is designed to increase the data rate under the bandwidth limit mainly. This is in accordance with the objective function \eqref{Obj_Fun} and thus this algorithm focuses on user-centric case where \acp{SC} with users who demand high data rate are given priority. This algorithm provides an efficient solution in three simple steps with less computational complexity as compared to B\&B method and is summarized in Algorithm \ref{algo:DMDMSA}. Note that, we had presented a similar algorithm named Distributed Maximal Demand Minimum Servers ((DM)$^2$S) in \cite{AwaisGlobeCom}, where a different decision ratio was used and the case of $C_r < R$ at step 3 was not considered.

\begin{varalgorithm}{M(DM)$^2$S}
\renewcommand{\baselinestretch}{1.1}
\small
%\floatname{algorithm}{Distributed Maximal Demand Minimum Servers {(DM)$^2$S} Algorithm}
\caption{Modified Distributed Maximal Demand Minimum Servers Algorithm}
\label{algo:DMDMSA}
\begin{algorithmic}[1]
    \REQUIRE $N_{SC}, \; N_D, \; \text{SINR}_{\text{max}}, \; N_l, \; B, \; R, \; \text{SINR}_{ij}, \; r_{ij}, \; b_{ij}$
    \ENSURE $\Am$
    \STATE Initialize: $\Am = \boldsymbol{0}_{N_{SC} \times N_D}$
    \STATE \textbf{Step 1:} at each \ac{SC}
    \FOR {$i = 1$ \TO $N_{SC}$}
        \STATE Create a list of \acp{NFP} satisfying $\text{SINR}_{ij} \geq \text{SINR}_{\text{min}}$
        \STATE Out of the list, select and request $j$-th \ac{NFP} with max. $r_{ij}/b_{ij}$
    \ENDFOR
    \STATE \textbf{Step 2:} at each \ac{NFP}
    \STATE Initialize counters: $C_{N_l}^j = 0, \; C_b^j = 0$ for $j \in \Mc$
    \FOR {$j = 1$ \TO $N_D$}
        \WHILE {$C_{N_l}^j < N_l$ AND $C_b^j < B$ AND List for the $j$-th \ac{NFP} not empty}
            \STATE Select $i$-th \ac{SC} with max. $r_{ij}/b_{ij}$
            \IF {$C_b^j + b_{ij} \leq B$}
                \STATE Update $A_{ij}=1$, $C_{N_l}^j = C_{N_l}^j+1$ and $C_b^j = C_b^j + b_{ij}$
            \ENDIF
        \ENDWHILE
    \ENDFOR
    \STATE \textbf{Step 3:} at mother-\ac{NFP}
    \STATE Initialize: $C_r$ as total data rate of associated \acp{SC}
    \STATE Create a list of un-associated \acp{SC} by scanning $\Am$
    \STATE Initialize counters $C_{N_l}^j$ and $C_b^j$ by scanning $\Am$
    \WHILE {$C_r < R$ AND List of un-associated \acp{SC} not empty}
        \STATE Out of the list, discard \ac{NFP}-\ac{SC} pairs with $\text{SINR}_{ij} < \text{SINR}_{\text{min}}$
        \STATE Select \ac{NFP}-\ac{SC} pair with max. $r_{ij}/b_{ij}$
        \IF {$C_{N_l}^j < N_l$ AND $C_b^j < B$}
            \IF {$C_r + r_{ij} \leq R$ AND $C_b^j + b_{ij} \leq B$}
                \STATE Update $A_{ij}=1$, $C_{N_l}^j = C_{N_l}^j+1$, $C_r = C_r + r_{ij}$ and $C_b^j = C_b^j + b_{ij}$
                \STATE Discard other pairs of the $i$-th \ac{SC} from the list
            \ENDIF
        \ELSE
            \STATE Discard all pairs of the $j$-th \ac{NFP} from the list
        \ENDIF
    \ENDWHILE
    \WHILE {$C_r > R$}
        \STATE Create a list of associated \ac{NFP}-\ac{SC} pairs
        \STATE Select the pair with min. $r_{ij}$
        \STATE De-associate the selected pair and update $A_{ij}=0$, $C_{N_l}^j = C_{N_l}^j-1$, $C_r = C_r - r_{ij}$ and $C_b^j = C_b^j - b_{ij}$
    \ENDWHILE
\end{algorithmic}
\end{varalgorithm}

\subsection{Centralized Maximal Demand Minimum Servers Algorithm}\label{sec:CMDMSA}
This algorithm is designed for the \ac{C-RAN} architecture where the processing takes place mainly at the baseband unit (BBU) pool. Thus, here we consider that the \acp{SC} and \acp{NFP} only carry the control information and all the data processing takes place at either the mother-\ac{NFP} and the BBU pool. Both the mother-\ac{NFP} or he BBU pool receive all the necessary information from the \acp{SC} and \acp{NFP}. Similar to the distributed algorithm, this one is designed also for the user-centric case where priority is given to the \acp{SC} demanding a high data rate and keeping in view the bandwidth constraint. Thus, we use the same decision ratio in this algorithm as used in Algorithm \ref{algo:DMDMSA}.

Mother-\ac{NFP} receives the necessary information about the \acp{SC} and \acp{NFP} such as SINR of the \ac{NFP}-\ac{SC} links $\text{SINR}_{ij}$, minimum SINR requirement of the system $\text{SINR}_{\text{min}}$, demanded bandwidth $b_{ij}$ and data rate $r_{ij}$ of \acp{SC}, number of links $N_{l_j}$ and bandwidth $B$ limits of the \acp{NFP} and backhaul data rate limit $R$. Using all of the above control information, mother-\ac{NFP} creates a list of \ac{NFP}-\ac{SC} pairs that satisfy the SINR constraint \eqref{cons3}. It also initializes the counters form zero for the number of links $C_{N_l}^j$ assigned to the $j$-th \ac{NFP}, assigned bandwidth $C_b^j$ to the $j$-th \ac{NFP} and assigned sum data rate $C_r$ of all the \acp{NFP}. The backhaul data rate limit $R$, i.e., constraint \eqref{cons1} is verified by mother-\ac{NFP} such as $C_r < R$. If the verification fails, the algorithm terminates. Otherwise, the association proceeds as follows. Mother-\ac{NFP} goes through the list of \ac{NFP}-\ac{SC} pairs and selects the pair that provides maximum decision ratio $r_{ij}/b_{ij}$. For the selected $j$-th \ac{NFP}, it verifies the number of links $N_l$ and bandwidth $B$ limits such as $C_{N_l}^j < N_l$ and $C_b^j < B$. If any one of the two limits is exceeded, it means the selected \ac{NFP} cannot provide further resources for the association of \acp{SC}. Thus, the selected and remaining pairs related to the selected $j$-th \ac{NFP} are discarded from the list. If the two constraints are satisfied, then mother-\ac{NFP} verifies the sum data rate and bandwidth constraints \eqref{cons1} and \eqref{cons2}, respectively, using demanded data rate and bandwidth information of the selected \ac{NFP}-\ac{SC} pair such as $C_r + r_{ij} \leq R$ and $C_b^j + b_{ij} \leq B$. If the constraints are satisfied the mother-\ac{NFP} associates the selected pair by modifying the association matrix entry as $A_{ij} = 1$ and updates the respective counters accordingly. Further, after the association, the remaining links of the $i$-th \ac{SC} with other \acp{NFP} except the selected \ac{NFP} are discarded from the list. In case the constraints are not satisfied, the selected pair is discarded from the list. The algorithm terminates if either the sum data rate limit is reached or the list of \ac{NFP}-\ac{SC} pairs to be associated ends. The whole procedure is summarized in Algorithm \ref{algo:CMDMSA}.

\begin{varalgorithm}{CMDMSA}
\renewcommand{\baselinestretch}{1.1}
\small
\caption{\small{Centralized Maximal Demand Minimum Servers Algorithm}}
\label{algo:CMDMSA}
\begin{algorithmic}[1]
    \REQUIRE $N_{SC}, \; N_D, \; N_l, \; \text{SINR}_{\text{min}}, \; B, \; R, \; \text{SINR}_{ij}, \; r_{ij}, \; b_{ij}$
    \ENSURE $\Am$
    \STATE Create a list of \ac{NFP}-\ac{SC} pairs satisfying $\text{SINR}_{ij} \geq \text{SINR}_{\text{min}}$
    \STATE Initialize counters: $C_{N_l}^j = 0$, $C_b^j = 0$ for $j \in \Mc$ and $C_r = 0$
    \WHILE {List of \ac{NFP}-\ac{SC} pairs is not empty AND $C_r < R$}
        \STATE Select \ac{NFP}-\ac{SC} pair with max. $r_{ij}/b_{ij}$
        \IF {$C_{N_l}^j < N_l$ AND $C_b^j < B$}
            \IF {$C_r + r_{ij} \leq R$ AND $C_b^j + b_{ij} \leq B$}
                \STATE Update $A_{ij}=1$, $C_{N_l}^j = C_{N_l}^j+1$, $C_r = C_r + r_{ij}$, and $C_b^j = C_b^j + b_{ij}$
                \STATE Discard other pairs of \ac{SC} from the list
            \ELSE
                \STATE Discard selected \ac{NFP}-\ac{SC} pair from the list
            \ENDIF
        \ELSE
            \STATE Discard all pairs of the $j$-th \ac{NFP} from the list
        \ENDIF
    \ENDWHILE
\end{algorithmic}
\end{varalgorithm}

%\begin{figure}[t!]\centering
%	\setlength\figureheight{4.6cm}
%	\setlength\figurewidth{4.6cm}
%    \subfloat[\ac{GAP} Bound association.]{\label{fig:fig1a}
%	\input{./Figures/Fig1_BandB_Bound.tikz}} \vfill \medskip
%    \subfloat[LP association.]{\label{fig:fig1b}
%	\input{./Figures/Fig1_LP.tikz}} \vfill \medskip
%    \subfloat[B\&B association.]{\label{fig:fig1c}
%	\input{./Figures/Fig1_BandB.tikz}} \vfill \medskip
%    \subfloat[\ref{algo:DMDMSA} \& \ref{algo:CMDMSA} association.]{\label{fig:fig1d}
%	\input{./Figures/Fig1_our.tikz}} \vfill
%    \subfloat{\label{fig:fig1e}
%	\input{./Figures/Fig1_label.tikz}}
%	\caption{2D view of a spatial pattern and the resulting association of \acp{NFP} and \acp{SC} with constraints $N_l = 10$, $B = 0.4$ GHz, and $R = 2.3$ Gbps.}
%	\label{fig:fig1}
%\end{figure}
\begin{table}[t!]
\renewcommand{\arraystretch}{1.1}
\centering
\caption{Simulation Parameters}
\label{tab:SimPar}
\begin{tabular}{|c|c|c|c|}
\hline
\textbf{Parameter}          & \textbf{Value}                & \textbf{Parameter}       & \textbf{Value}   \\ \hline
$\alpha$                    & 9.61                          & $\beta$                  & 0.16             \\ \hline
$\eta_{\text{LoS}}$         & 1 dB                          & $\eta_{\text{NLoS}}$     & 20 dB            \\ \hline
$f_c$                       & 2 GHz                         & $P_t$                    & 5 Watts          \\ \hline
$\text{SINR}_{\text{min}}$  & -5 dB                         & $\text{PL}_{\text{max}}$ & 115 dB           \\ \hline
$\lambda$                   & 5 $\times 10^{-6}$ m$^{-2}$   & $h_{D_{\text{max}}}$     & 300 meters       \\ \hline
$N_{SC}$                    & 30                            & $N_D$                    & 3                \\ \hline
$\rv_{SC}$                  & \multicolumn{3}{c|}{ \{ 30, 60, 90, 120, 150 \} Mbps }                      \\ \hline
\end{tabular}
\end{table}

\section{Numerical Results}\label{sec:Sim_Res}

Consider a system as shown in Fig. \ref{fig:SysMod}, where we analyze the association problem of \acp{SC} and \acp{NFP} distributed over a square region of area $A_S=16$ km$^2$. For the distribution of \acp{SC} and \acp{NFP}, we use Algorithm \ref{algo:Intial}. The same average density $\lambda$ is used in \emph{Mat\`ern} type-I hard-core process for the distribution of both \acp{SC} and \acp{NFP}. Neighbouring \acp{SC} are separated by maintaining a minimum separation of $s_{SC}^{\text{min}} = 300$ meters. For the distribution of \acp{NFP}, we compute the minimum separation $s_D^{\text{min}}$ between neighbouring \acp{NFP} using maximum path loss $\text{PL}_{\text{max}}$ as shown in Table \ref{tab:SimPar}. Next, we assign the data rate to the \acp{SC} randomly from a pre-defined vector $\rv_{SC}$, where it is assumed that the $i$-th \ac{SC} will demand same data rate from any of the $j$-th \ac{NFP}, $j \in \Mc$ such that $r_{ij} = r_i$, $\forall$ $j \in \Mc$. Then, using the parameters defined in Table \ref{tab:SimPar}, the demanded bandwidth $b_{ij}$ of the \acp{SC} and $\text{SINR}_{ij}$ of the \ac{NFP}-\ac{SC} pairs are computed. A snapshot is taken of the current scenario and all the relevant information is passed to the algorithms to find the best possible association of \acp{SC} and \acp{NFP}. We consider a number of scenarios with varying distributions of \acp{NFP} and \acp{SC} to analyze the performance of association algorithms over different scenarios.

\begin{figure}[t!]\centering
	\setlength\figureheight{5.5cm}
	\setlength\figurewidth{5.5cm}
    \subfloat[\ac{GAP} Bound association.]{\label{fig:fig1a}
	\footnotesize
\begin{tikzpicture}
\begin{axis}[%
width=\figurewidth,
height=\figureheight,
xmin=0, xmax=4000,
xlabel near ticks,
xlabel={X-coordinate (meters)},
xmajorgrids,
ymin=0, ymax=4000,
ylabel near ticks,
ylabel={Y-coordinate (meters)},
ymajorgrids,
zmin=0, zmax=400,
ztick={\empty},
zmajorgrids,
view={0}{90},
axis x line*=bottom,
axis y line*=left,
axis z line*=left,
]
\addplot[only marks,mark=o,mark options={},mark size=2.3pt,color=black] plot table[row sep=crcr,]{%
3696.82448649415	3833.33081864512\\
1612.07287554816	2420.25400646298\\
928.266860686699	1902.48814417862\\
};
%\addlegendentry{Unassociated SCs};

\addplot[only marks,mark=*,mark options={},mark size=2.3pt,color=red] plot table[row sep=crcr,]{%
801.328006848771	1258.98120102693\\
2976.21757114201	172.481934282448\\
1250.22522755541	1709.19199995294\\
1744.02864613869	609.791410700445\\
2673.5181395338	244.430311198728\\
2764.50250505665	596.394148300725\\
1900.81593743129	164.484381489104\\
1784.80876641252	1937.52461384197\\
1593.6730754667	1010.73983918613\\
1375.52348584576	2059.47141956117\\
2376.7900145515	301.613921421186\\
2435.18419678485	749.188482160487\\
};
%\addlegendentry{Associated SCs};

\addplot3[only marks,mark=asterisk,mark options={},mark size=5pt,color=red] plot table[row sep=crcr,]{%
2140.38242564172	856.976980491098	300\\
};
%\addlegendentry{NFPs};

\addplot[only marks,mark=square*,mark options={},mark size=2pt,color=green] plot table[row sep=crcr,]{%
965.953629110736	3546.3851367558\\
2144.00366039004	3848.42669307757\\
586.933304609412	1886.75812299645\\
293.820976001316	3762.21832903258\\
1968.55335566752	2964.71028035246\\
1262.59644066917	2903.74724075916\\
};
\addplot3[only marks,mark=asterisk,mark options={},mark size=5pt,color=green] plot table[row sep=crcr,]{%
570.820669025533	3822.066610263	300\\
};
\addplot[only marks,mark=diamond*,mark options={},mark size=3.6092pt,color=blue] plot table[row sep=crcr,]{%
3524.41958014915	1018.06381639867\\
3685.10236414622	189.057348689127\\
3949.38501358752	3096.66512174053\\
3758.25936947533	784.278266192113\\
3429.51773356311	2013.34956680105\\
3719.20361848455	2857.12089297272\\
3149.15134313435	1238.82972606779\\
3377.34226356926	2879.98468001952\\
2957.48167473541	2394.42783355965\\
};
\addplot3[only marks,mark=asterisk,mark options={},mark size=5pt,color=blue] plot table[row sep=crcr,]{%
3233.28455918795	1602.91973570834	300\\
};
\end{axis}
\end{tikzpicture}%} \hspace{1cm}
    \subfloat[LP association.]{\label{fig:fig1b}
	\footnotesize
\begin{tikzpicture}
\begin{axis}[%
width=\figurewidth,
height=\figureheight,
xmin=0, xmax=4000,
xlabel near ticks,
xlabel={X-coordinate (meters)},
xmajorgrids,
ymin=0, ymax=4000,
ylabel near ticks,
ylabel={Y-coordinate (meters)},
ymajorgrids,
zmin=0, zmax=400,
ztick={\empty},
zmajorgrids,
view={0}{90},
axis x line*=bottom,
axis y line*=left,
axis z line*=left
]
\addplot[only marks,mark=o,mark options={},mark size=2.3pt,color=black] plot table[row sep=crcr,]{%
801.328006848771	1258.98120102693\\
1784.80876641252	1937.52461384197\\
};
\addplot[only marks,mark=*,mark options={},mark size=2.3pt,color=red] plot table[row sep=crcr,]{%
2976.21757114201	172.481934282448\\
1250.22522755541	1709.19199995294\\
1744.02864613869	609.791410700445\\
2673.5181395338	244.430311198728\\
2764.50250505665	596.394148300725\\
1900.81593743129	164.484381489104\\
1593.6730754667	1010.73983918613\\
1375.52348584576	2059.47141956117\\
2376.7900145515	301.613921421186\\
2435.18419678485	749.188482160487\\
928.266860686699	1902.48814417862\\
};
\addplot3[only marks,mark=asterisk,mark options={},mark size=5pt,color=red] plot table[row sep=crcr,]{%
2140.38242564172	856.976980491098	300\\
};
\addplot[only marks,mark=square*,mark options={},mark size=2pt,color=green] plot table[row sep=crcr,]{%
965.953629110736	3546.3851367558\\
2144.00366039004	3848.42669307757\\
1612.07287554816	2420.25400646298\\
586.933304609412	1886.75812299645\\
293.820976001316	3762.21832903258\\
1968.55335566752	2964.71028035246\\
1262.59644066917	2903.74724075916\\
};
\addplot3[only marks,mark=asterisk,mark options={},mark size=5pt,color=green] plot table[row sep=crcr,]{%
570.820669025533	3822.066610263	300\\
};
\addplot[only marks,mark=diamond*,mark options={},mark size=3.6092pt,color=blue] plot table[row sep=crcr,]{%
3524.41958014915	1018.06381639867\\
3685.10236414622	189.057348689127\\
3949.38501358752	3096.66512174053\\
3758.25936947533	784.278266192113\\
3696.82448649415	3833.33081864512\\
3429.51773356311	2013.34956680105\\
3719.20361848455	2857.12089297272\\
3149.15134313435	1238.82972606779\\
3377.34226356926	2879.98468001952\\
2957.48167473541	2394.42783355965\\
};
\addplot3[only marks,mark=asterisk,mark options={},mark size=5pt,color=blue] plot table[row sep=crcr,]{%
3233.28455918795	1602.91973570834	300\\
};
\end{axis}
\end{tikzpicture}%} \hspace{1cm}
    \subfloat[B\&B association.]{\label{fig:fig1c}
	\footnotesize
\begin{tikzpicture}
\begin{axis}[%
width=\figurewidth,
height=\figureheight,
xmin=0, xmax=4000,
xlabel near ticks,
xlabel={X-coordinate (meters)},
xmajorgrids,
ymin=0, ymax=4000,
ylabel near ticks,
ylabel={Y-coordinate (meters)},
ymajorgrids,
zmin=0, zmax=400,
ztick={\empty},
zmajorgrids,
view={0}{90},
axis x line*=bottom,
axis y line*=left,
axis z line*=left,
]
\addplot[only marks,mark=o,mark options={},mark size=2.3pt,color=black] plot table[row sep=crcr,]{%
801.328006848771	1258.98120102693\\
3696.82448649415	3833.33081864512\\
1612.07287554816	2420.25400646298\\
1784.80876641252	1937.52461384197\\
928.266860686699	1902.48814417862\\
};\label{entry_1}
%\addlegendentry{Unassociated SCs};

\addplot[only marks,mark=*,mark options={},mark size=2.3pt,color=red] plot table[row sep=crcr,]{%
2976.21757114201	172.481934282448\\
1250.22522755541	1709.19199995294\\
1744.02864613869	609.791410700445\\
2673.5181395338	244.430311198728\\
2764.50250505665	596.394148300725\\
1900.81593743129	164.484381489104\\
1593.6730754667	1010.73983918613\\
1375.52348584576	2059.47141956117\\
2376.7900145515	301.613921421186\\
2435.18419678485	749.188482160487\\
};\label{entry_2}
%\addlegendentry{Associated SCs};

\addplot3[only marks,mark=asterisk,mark options={},mark size=5pt,color=red] plot table[row sep=crcr,]{%
2140.38242564172	856.976980491098	300\\
};\label{entry_3}
%\addlegendentry{NFPs};

\addplot[only marks,mark=square*,mark options={},mark size=2pt,color=green] plot table[row sep=crcr,]{%
965.953629110736	3546.3851367558\\
2144.00366039004	3848.42669307757\\
586.933304609412	1886.75812299645\\
293.820976001316	3762.21832903258\\
1968.55335566752	2964.71028035246\\
1262.59644066917	2903.74724075916\\
};
\addplot3[only marks,mark=asterisk,mark options={},mark size=5pt,color=green] plot table[row sep=crcr,]{%
570.820669025533	3822.066610263	300\\
};
\addplot[only marks,mark=diamond*,mark options={},mark size=3.6092pt,color=blue] plot table[row sep=crcr,]{%
3524.41958014915	1018.06381639867\\
3685.10236414622	189.057348689127\\
3949.38501358752	3096.66512174053\\
3758.25936947533	784.278266192113\\
3429.51773356311	2013.34956680105\\
3719.20361848455	2857.12089297272\\
3149.15134313435	1238.82972606779\\
3377.34226356926	2879.98468001952\\
2957.48167473541	2394.42783355965\\
};
\addplot3[only marks,mark=asterisk,mark options={},mark size=5pt,color=blue] plot table[row sep=crcr,]{%
3233.28455918795	1602.91973570834	300\\
};
\end{axis}
\end{tikzpicture}%} \hspace{1cm}
    \subfloat[\ref{algo:DMDMSA} \& \ref{algo:CMDMSA} association.]{\label{fig:fig1d}
	\footnotesize
\begin{tikzpicture}
\begin{axis}[%
width=\figurewidth,
height=\figureheight,
xmin=0, xmax=4000,
xlabel near ticks,
xlabel={X-coordinate (meters)},
xmajorgrids,
ymin=0, ymax=4000,
ylabel near ticks,
ylabel={Y-coordinate (meters)},
ymajorgrids,
zmin=0, zmax=400,
ztick={\empty},
zmajorgrids,
view={0}{90},
axis x line*=bottom,
axis y line*=left,
axis z line*=left,
]
\addplot[only marks,mark=o,mark options={},mark size=2.3pt,color=black] plot table[row sep=crcr,]{%
3696.82448649415	3833.33081864512\\
1612.07287554816	2420.25400646298\\
1784.80876641252	1937.52461384197\\
1375.52348584576	2059.47141956117\\
928.266860686699	1902.48814417862\\
};

\addplot[only marks,mark=*,mark options={},mark size=2.3pt,color=red] plot table[row sep=crcr,]{%
801.328006848771	1258.98120102693\\
2976.21757114201	172.481934282448\\
1250.22522755541	1709.19199995294\\
1744.02864613869	609.791410700445\\
2673.5181395338	244.430311198728\\
2764.50250505665	596.394148300725\\
1900.81593743129	164.484381489104\\
1593.6730754667	1010.73983918613\\
2376.7900145515	301.613921421186\\
2435.18419678485	749.188482160487\\
};

\addplot3[only marks,mark=asterisk,mark options={},mark size=5pt,color=red] plot table[row sep=crcr,]{%
2140.38242564172	856.976980491098	300\\
};

\addplot[only marks,mark=square*,mark options={},mark size=2pt,color=green] plot table[row sep=crcr,]{%
965.953629110736	3546.3851367558\\
2144.00366039004	3848.42669307757\\
586.933304609412	1886.75812299645\\
293.820976001316	3762.21832903258\\
1968.55335566752	2964.71028035246\\
1262.59644066917	2903.74724075916\\
};
\addplot3[only marks,mark=asterisk,mark options={},mark size=5pt,color=green] plot table[row sep=crcr,]{%
570.820669025533	3822.066610263	300\\
};
\addplot[only marks,mark=diamond*,mark options={},mark size=3.6092pt,color=blue] plot table[row sep=crcr,]{%
3524.41958014915	1018.06381639867\\
3685.10236414622	189.057348689127\\
3949.38501358752	3096.66512174053\\
3758.25936947533	784.278266192113\\
3429.51773356311	2013.34956680105\\
3719.20361848455	2857.12089297272\\
3149.15134313435	1238.82972606779\\
3377.34226356926	2879.98468001952\\
2957.48167473541	2394.42783355965\\
};
\addplot3[only marks,mark=asterisk,mark options={},mark size=5pt,color=blue] plot table[row sep=crcr,]{%
3233.28455918795	1602.91973570834	300\\
};
\end{axis}
\end{tikzpicture}%} \hspace{1cm}
    \subfloat{\label{fig:fig1e}
	\begin{tikzpicture}[auto, node distance=1cm,>=latex',scale=0.65, transform shape,every text node part/.style={align=center}]

\node [draw,fill=white,anchor=north west] at (rel axis cs: 0,0) {\shortstack[l]{
        \ref{entry_1} \, Unassociated SCs ~~
        \ref{entry_2} \, Associated SCs ~~
        \ref{entry_3} \, NFPs}};

\end{tikzpicture}%}
	\caption{2D view of a spatial pattern and the resulting association of \acp{NFP} and \acp{SC} with constraints $N_l = 10$, $B = 0.4$ GHz, and $R = 2.3$ Gbps.}
	\label{fig:fig1}
\end{figure}
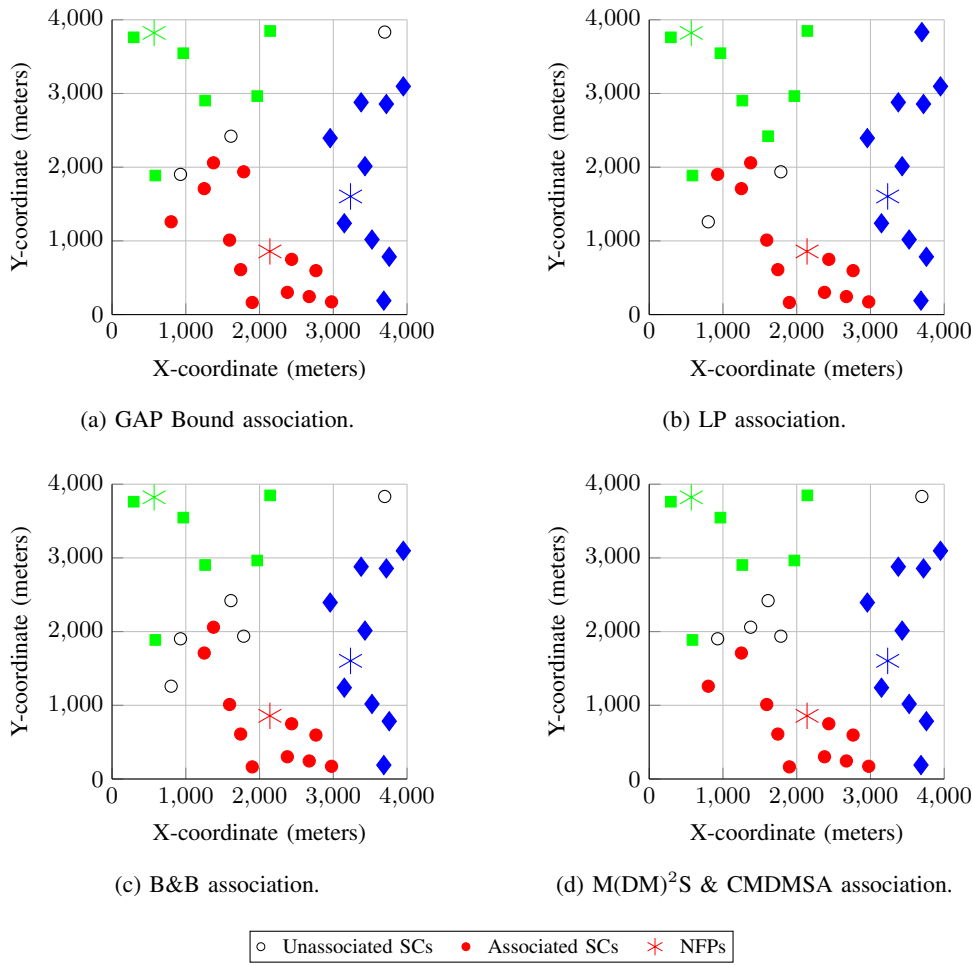

Fig. \ref{fig:fig1} depicts one of the considered scenarios for the distribution of \acp{SC} and \acp{NFP}. Here, only 2D view is shown as the $N_D$ \acp{NFP} are placed at the same height. Fig. \ref{fig:fig1a}  and \ref{fig:fig1b} show the association computed using \ac{GAP} bound and \ac{LP} relaxation, respectively. Further, the association computed using B\&B exhaustive search that considers all the constraints without any relaxation is shown in Fig. \ref{fig:fig1c}. As the association of our greedy algorithms is same for the considered case, so the association is shown jointly in \ref{fig:fig1d}. It can be noticed that all the algorithms and even as per the \ac{GAP} bound, we are unable to associate $N_{SC}$ \acp{SC}. This is due to the constraints \eqref{cons1}, \eqref{cons2} and \eqref{cons4}. Thus, the applied backhaul data rate $R = 2.3$ Gbps, bandwidth $B = 0.4$ GHz and number of links $N_l = 10$ limits are less than the system requirements. In Fig. \ref{fig:fig1a}, although the \ac{GAP} bound does not consider the data rate and number of links constraints, 3 \acp{SC} remain un-associated due to the bandwidth limit only. In the \ac{LP}-relaxed solution, if the association matrix entry is greater than zero then the related \ac{NFP}-\ac{SC} pair is shown as associated. However, as \ac{LP}-relaxed solution considers all constraints, 3 \acp{SC} remains un-associated due to limited resources. The B\&B method takes care of all the constraints without any relaxation and thus, 5 \acp{SC} are not associated to satisfy the constraints. Our greedy algorithms also take care of all the constraints and has same number of unassociated SCs as B\&B but works in a different fashion, thus, result in a different association.

To get more insights into the presented algorithms and bounds, we run a few experiments in the following to study the effect of various limitations due to constraints \eqref{cons1}, \eqref{cons2} and \eqref{cons4} on the objective function \eqref{Obj_Fun}, i.e., sum data rate of the overall system. Moreover, we discuss the effect of the constraints on the number of associated \acp{SC}.

\subsection{Experiment 1: Effect of backhaul data rate}\label{Exp1}
Fig. \ref{fig:fig2} plots the sum data rate of the associated \acp{SC} versus $R_r$, which is a ratio of the backhaul data rate limit $R$ to the sum of the demanded data rate by the $N_{SC}$ \acp{SC}. For a single value of $R_r$, we have generated 100 different scenarios and then averaged the associated data rate of \acp{SC}. For various scenarios, the ratio $R_r$ is kept the same by changing the backhaul data rate limit according to the demanded sum data rate of the \acp{SC}. Further, the other limits such as those of the bandwidth and the number of links are relaxed by providing more resources than required. This is done so that the effect of the backhaul data rate can be observed easily. We can notice that for every algorithm except the \ac{GAP} bound, the sum data rate increases with the increase in ratio $R_r$ until the ratio reaches one i.e, $R_r=1$. Then, the sum data rate remains the same even with the increase in $R_r$ because beyond $R_r=1$ algorithm have already associated all the \acp{SC} and thus providing extra resources is unnecessary. Further, we can observe that the performance of our greedy algorithms is very close to the exhaustive B\&B method and \ac{LP} relaxation. The \ac{GAP} bound remains the same as the maximum achievable data rate, as the \ac{GAP} bound does not take care of the constraint \eqref{cons1}.

Fig. \ref{fig:fig3} depicts the effect of the data rate constraint on the number of associated \acp{SC}. It can be noticed that the number of associated \acp{SC} by our greedy algorithms is nearly same as that of \ac{LP}-relaxed solution. The performance of B\&B method is close to the \ac{LP}-relaxed solution, however, it deteriorates when the rate ratio ranges between 0.4 to 0.6\footnote{It must be noted that B\&B is still an upper-bound for our algorithms if for maximizing the sum data rate objective, while in this figure the number of associated \acp{SC} is considered.}. In all conditions the worst performance is obtained by (DM)$^2$S algorithm.

\begin{figure}[t!]\centering
	\setlength\figureheight{6cm}
	\setlength\figurewidth{8.2cm}
	\footnotesize
\definecolor{mycolor1}{rgb}{0.00000,1.00000,1.00000}%
\begin{tikzpicture}
\begin{axis}[%
width=\figurewidth,
height=\figureheight,
xmin=0, xmax=1.5,
xlabel near ticks,
xlabel={$R_r$},
xmajorgrids,
ymin=0, ymax=3,
ylabel near ticks,
ylabel={Sum data rate of SCs (Gbps)},
ymajorgrids,
legend style={at={(0.97,0.03)},anchor=south east,legend cell align=left,align=left,draw=white!15!black}
]
\addplot [color=black,line width=1.5pt,mark size=3.5pt,only marks,mark=square,mark options={solid}]
  table[row sep=crcr]{%
0.1	2.6622\\
0.2	2.6622\\
0.3	2.6622\\
0.4	2.6622\\
0.5	2.6622\\
0.6	2.6622\\
0.7	2.6622\\
0.8	2.6622\\
0.9	2.6622\\
1	2.6622\\
1.2	2.6622\\
1.4	2.6622\\
};
\addlegendentry{GAP Bound};

\addplot [color=red,line width=1.5pt,mark size=4.0pt,only marks,mark=asterisk,mark options={solid}]
  table[row sep=crcr]{%
0.1	0.26622\\
0.2	0.53244\\
0.3	0.79866\\
0.4	1.06488\\
0.5	1.3311\\
0.6	1.59732\\
0.7	1.86354\\
0.8	2.12976\\
0.9	2.39598\\
1	2.6622\\
1.2	2.6622\\
1.4	2.6622\\
};
\addlegendentry{LP};

\addplot [color=blue,line width=1.5pt,mark size=4.0pt,only marks,mark=+,mark options={solid}]
  table[row sep=crcr]{%
0.1	0.2523\\
0.2	0.5196\\
0.3	0.7851\\
0.4	1.0536\\
0.5	1.3233\\
0.6	1.5852\\
0.7	1.8495\\
0.8	2.1192\\
0.9	2.3823\\
1	2.6622\\
1.2	2.6622\\
1.4	2.6622\\
};
\addlegendentry{B\&B};

\addplot [color=green,line width=1.5pt,mark size=4.0pt,only marks,mark=x,mark options={solid}]
  table[row sep=crcr]{%
0.1	0.2124\\
0.2	0.4779\\
0.3	0.7404\\
0.4	1.0161\\
0.5	1.2789\\
0.6	1.5402\\
0.7	1.8105\\
0.8	2.0826\\
0.9	2.3409\\
1	2.6622\\
1.2	2.6622\\
1.4	2.6622\\
};
\addlegendentry{\ref{algo:DMDMSA}};

\addplot [color=red,line width=1.5pt,mark size=4.0pt,only marks,mark=o,mark options={solid}]
  table[row sep=crcr]{%
0.1	0.2523\\
0.2	0.5193\\
0.3	0.7842\\
0.4	1.053\\
0.5	1.3218\\
0.6	1.5825\\
0.7	1.8432\\
0.8	2.1111\\
0.9	2.3691\\
1	2.6622\\
1.2	2.6622\\
1.4	2.6622\\
};
\addlegendentry{\ref{algo:CMDMSA}};

\addplot [color=mycolor1,line width=1.5pt,mark size=5pt,only marks,mark=diamond,mark options={solid}]
  table[row sep=crcr]{%
0.1	0.1737\\
0.2	0.4692\\
0.3	0.747\\
0.4	1.0233\\
0.5	1.2861\\
0.6	1.5444\\
0.7	1.8084\\
0.8	2.0763\\
0.9	2.3442\\
1	2.6622\\
1.2	2.6622\\
1.4	2.6622\\
};
\addlegendentry{(DM)$^2$S};

\end{axis}
\end{tikzpicture}%
	\caption{Sum data rate of \acp{SC} vs. $R_r$ for association algorithms with restrictions $B = 5$ GHz, $N_l = 30$ averaged over 100 different scenarios.}
	\label{fig:fig2}
\end{figure}
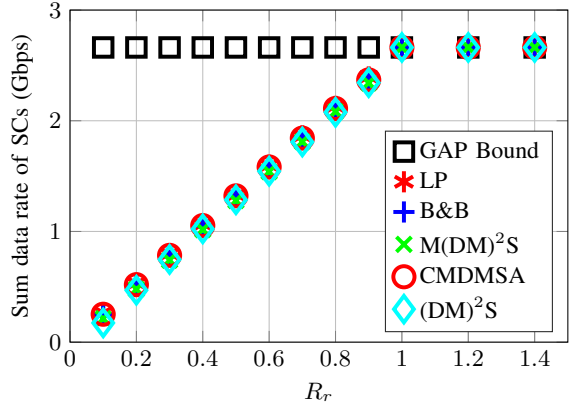

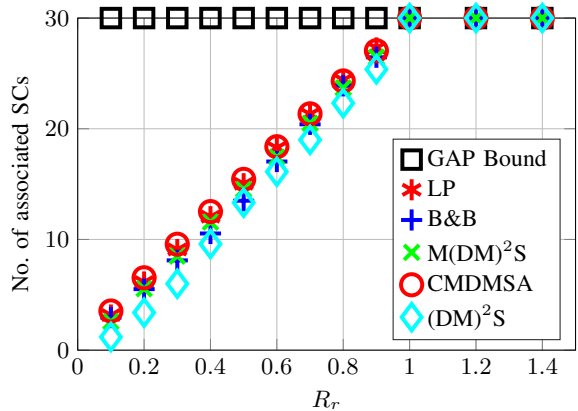
\begin{figure}[t!]\centering
	\setlength\figureheight{6cm}
	\setlength\figurewidth{8.2cm}
	\footnotesize
\definecolor{mycolor1}{rgb}{0.00000,1.00000,1.00000}%
\begin{tikzpicture}
\begin{axis}[%
width=\figurewidth,
height=\figureheight,
xmin=0, xmax=1.5,
xlabel near ticks,
xlabel={$R_r$},
xmajorgrids,
ymin=0, ymax=30,
ylabel near ticks,
ylabel={No. of associated SCs},
ymajorgrids,
legend style={at={(0.97,0.03)},anchor=south east,legend cell align=left,align=left,draw=white!15!black}
]
\addplot [color=black,line width=1.5pt,mark size=3.5pt,only marks,mark=square,mark options={solid}]
  table[row sep=crcr]{%
0.1	30\\
0.2	30\\
0.3	30\\
0.4	30\\
0.5	30\\
0.6	30\\
0.7	30\\
0.8	30\\
0.9	30\\
1	30\\
1.2	30\\
1.4	30\\
};
\addlegendentry{GAP Bound};

\addplot [color=red,line width=1.5pt,mark size=4.0pt,only marks,mark=asterisk,mark options={solid}]
  table[row sep=crcr]{%
0.1	3.172982825821\\
0.2	6.15225153184161\\
0.3	9.09245292827549\\
0.4	12.0585421062843\\
0.5	15.0337367396434\\
0.6	18.2800564547024\\
0.7	21.3222952274393\\
0.8	24.4074773301003\\
0.9	27.2997737115796\\
1	30\\
1.2	30\\
1.4	30\\
};
\addlegendentry{LP};

\addplot [color=blue,line width=1.5pt,mark size=4.0pt,only marks,mark=+,mark options={solid}]
  table[row sep=crcr]{%
0.1	2.94\\
0.2	5.51\\
0.3	8.11\\
0.4	10.54\\
0.5	13.5\\
0.6	17.04\\
0.7	20.4\\
0.8	23.88\\
0.9	26.46\\
1	30\\
1.2	30\\
1.4	30\\
};
\addlegendentry{B\&B};

\addplot [color=green,line width=1.5pt,mark size=4.0pt,only marks,mark=x,mark options={solid}]
  table[row sep=crcr]{%
0.1	2.58\\
0.2	5.55\\
0.3	8.5\\
0.4	11.57\\
0.5	14.52\\
0.6	17.43\\
0.7	20.5\\
0.8	23.7\\
0.9	26.47\\
1	30\\
1.2	30\\
1.4	30\\
};
\addlegendentry{\ref{algo:DMDMSA}};

\addplot [color=red,line width=1.5pt,mark size=4.0pt,only marks,mark=o,mark options={solid}]
  table[row sep=crcr]{%
0.1	3.54\\
0.2	6.54\\
0.3	9.57\\
0.4	12.52\\
0.5	15.43\\
0.6	18.38\\
0.7	21.35\\
0.8	24.32\\
0.9	27.05\\
1	30\\
1.2	30\\
1.4	30\\
};
\addlegendentry{\ref{algo:CMDMSA}};

\addplot [color=mycolor1,line width=1.5pt,mark size=5pt,only marks,mark=diamond,mark options={solid}]
  table[row sep=crcr]{%
0.1	1.19\\
0.2	3.39\\
0.3	6\\
0.4	9.59\\
0.5	13.32\\
0.6	16.12\\
0.7	19\\
0.8	22.31\\
0.9	25.37\\
1	30\\
1.2	30\\
1.4	30\\
};
\addlegendentry{(DM)$^2$S};

\end{axis}
\end{tikzpicture}%
	\caption{Number of associated \acp{SC} vs. $R_r$ for the association algorithms with restrictions $B = 5$ Gbps, $N_l = 30$ averaged over 100 different scenarios.}
	\label{fig:fig3}
\end{figure}

\subsection{Experiment 2: Effect of bandwidth limit}\label{Exp2}
Fig. \ref{fig:fig4} shows the sum data rate of \acp{SC} versus the bandwidth limit applied to \acp{NFP}. We can notice that the sum data rate increases with the increase in bandwidth limit, which is intuitive as \acp{NFP} will be able to associate more \acp{SC} due to more bandwidth. Here, we have supplied more than required backhaul data rate limit $R$ and number of links limit $N_l$, just to observe the effect of bandwidth limit alone. In terms of bounds, we can notice that our derived \ac{GAP} bound provides a more tight bound as compared to \ac{LP}-relaxed bound. In this case, the \ac{GAP} bound and the B\&B method provide the same result due to the fact that both take care of the bandwidth limit and the remaining constraints are already relaxed. This shows that the \ac{GAP} bound whose run-time speed is better than exhaustive B\&B method can be used to study the comparison of the proposed algorithms if we neglect the backhaul data rate and number of links limits. Otherwise, if we include the other two constraints then the \ac{GAP} bound provides an upper limit on the sum data rate of the \acp{SC}. Moreover, we can see the performance of our proposed greedy algorithms \ref{algo:DMDMSA} and \ref{algo:CMDMSA} is very close to the exhaustive B\&B method and better than (DM)$^2$S algorithm.

Fig. \ref{fig:fig5} depicts the effect of the bandwidth constraint on the number of associated \acp{SC}. It can be noticed that the maximum number of \acp{SC} associated by \ac{LP}-relaxed solution matches that achieved by our designed greedy algorithms. In case of less available bandwidth, the \ac{GAP} bound and the B\&B method associated fewer \acp{SC} than the \ac{LP} solution, however the gap in the performance decreases with the increase in bandwidth. The (DM)$^2$S algorithm results in the worst performance.

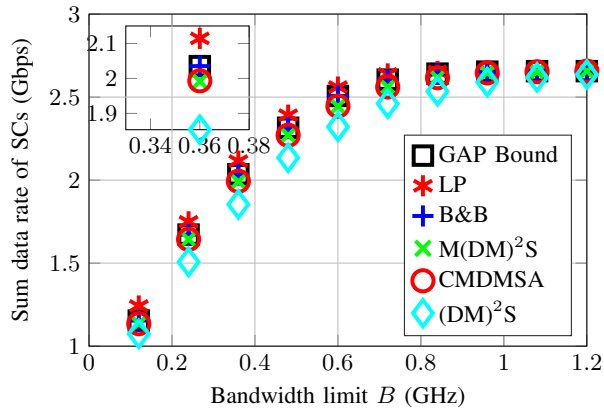
\begin{figure}[t!]\centering
	\setlength\figureheight{6cm}
	\setlength\figurewidth{8.2cm}
	\footnotesize
\definecolor{mycolor1}{rgb}{0.00000,1.00000,1.00000}%
\begin{tikzpicture}
\begin{axis}[%
width=0.2\figurewidth,
height=0.23\figureheight,
at={(0.05965\figurewidth,0.48\figureheight)},
scale only axis,
xmin=0.33,
xmax=0.38,
ymin=1.8534,
ymax=2.15
]
\addplot [color=black,line width=1.5pt,mark size=3.5pt,only marks,mark=square,mark options={solid},forget plot]
  table[row sep=crcr]{%
0.36	2.0352\\
0.48	2.3151\\
};
\addplot [color=red,line width=1.5pt,mark size=4.0pt,only marks,mark=asterisk,mark options={solid},forget plot]
  table[row sep=crcr]{%
0.36	2.11455021017117\\
0.48	2.39117550961658\\
};
\addplot [color=blue,line width=1.5pt,mark size=4.0pt,only marks,mark=+,mark options={solid},forget plot]
  table[row sep=crcr]{%
0.36	2.0352\\
0.48	2.3151\\
};
\addplot [color=green,line width=1.5pt,mark size=4.0pt,only marks,mark=x,mark options={solid},forget plot]
  table[row sep=crcr]{%
0.36	1.9917\\
0.48	2.2701\\
};
\addplot [color=red,line width=1.5pt,mark size=4.0pt,only marks,mark=o,mark options={solid},forget plot]
  table[row sep=crcr]{%
0.36	1.9932\\
0.48	2.2704\\
};
\addplot [color=mycolor1,line width=1.5pt,mark size=5pt,only marks,mark=diamond,mark options={solid},forget plot]
  table[row sep=crcr]{%
0.36	1.8534\\
0.48	2.1345\\
};
\end{axis}

\begin{axis}[%
width=\figurewidth,
height=\figureheight,
xmin=0,
xmax=1.2,
xlabel near ticks,
xlabel={Bandwidth limit $B$ (GHz)},
xmajorgrids,
ymin=1,
ymax=3,
ylabel near ticks,
ymajorgrids,
ylabel={Sum data rate of SCs (Gbps)},
legend style={at={(0.97,0.03)},anchor=south east,legend cell align=left,align=left,draw=white!15!black}
]
\addplot [color=black,line width=1.5pt,mark size=3.5pt,only marks,mark=square,mark options={solid}]
  table[row sep=crcr]{%
0.12	1.1541\\
0.24	1.6719\\
0.36	2.0352\\
0.48	2.3151\\
0.6	2.5068\\
0.72	2.6052\\
0.84	2.6412\\
0.96	2.6532\\
1.08	2.6562\\
1.2	2.6562\\
};
\addlegendentry{GAP Bound};

\addplot [color=red,line width=1.5pt,mark size=4.0pt,only marks,mark=asterisk,mark options={solid}]
  table[row sep=crcr]{%
0.12	1.24203290679145\\
0.24	1.74902206242096\\
0.36	2.11455021017117\\
0.48	2.39117550961658\\
0.6	2.56063282984646\\
0.72	2.63956218776911\\
0.84	2.65449804058225\\
0.96	2.65984766734617\\
1.08	2.66192286907647\\
1.2	2.6622\\
};
\addlegendentry{LP};

\addplot [color=blue,line width=1.5pt,mark size=4.0pt,only marks,mark=+,mark options={solid}]
  table[row sep=crcr]{%
0.12	1.1541\\
0.24	1.6719\\
0.36	2.0352\\
0.48	2.3151\\
0.6	2.5068\\
0.72	2.6052\\
0.84	2.6412\\
0.96	2.6532\\
1.08	2.6562\\
1.2	2.6562\\
};
\addlegendentry{B\&B};

\addplot [color=green,line width=1.5pt,mark size=4.0pt,only marks,mark=x,mark options={solid}]
  table[row sep=crcr]{%
0.12	1.1349\\
0.24	1.6434\\
0.36	1.9917\\
0.48	2.2701\\
0.6	2.4438\\
0.72	2.5608\\
0.84	2.6166\\
0.96	2.6457\\
1.08	2.652\\
1.2	2.6562\\
};
\addlegendentry{\ref{algo:DMDMSA}};

\addplot [color=red,line width=1.5pt,mark size=4.0pt,only marks,mark=o,mark options={solid}]
  table[row sep=crcr]{%
0.12	1.1352\\
0.24	1.6431\\
0.36	1.9932\\
0.48	2.2704\\
0.6	2.4456\\
0.72	2.5608\\
0.84	2.6166\\
0.96	2.6457\\
1.08	2.652\\
1.2	2.6562\\
};
\addlegendentry{\ref{algo:CMDMSA}};

\addplot [color=mycolor1,line width=1.5pt,mark size=5pt,only marks,mark=diamond,mark options={solid}]
  table[row sep=crcr]{%
0.12	1.074\\
0.24	1.5069\\
0.36	1.8534\\
0.48	2.1345\\
0.6	2.3211\\
0.72	2.4603\\
0.84	2.5347\\
0.96	2.5905\\
1.08	2.6184\\
1.2	2.6361\\
};
\addlegendentry{(DM)$^2$S};

\end{axis}
\end{tikzpicture}%
	\caption{Sum data rate of \acp{SC} vs. bandwidth limit $B$ for the association algorithms with restrictions $R = 5$ Gbps, $N_l = 30$ averaged over 100 different scenarios.}
	\label{fig:fig4}
\end{figure}

\begin{figure}[t!]\centering
	\setlength\figureheight{6cm}
	\setlength\figurewidth{8.2cm}
	\footnotesize
\definecolor{mycolor1}{rgb}{0.00000,1.00000,1.00000}%
\begin{tikzpicture}
\begin{axis}[%
width=\figurewidth,
height=\figureheight,
xmin=0, xmax=1.2,
xlabel near ticks,
xlabel={Bandwidth limit $B$ (GHz)},
xmajorgrids,
ymin=10, ymax=30,
ylabel near ticks,
ylabel={No. of associated SCs},
ymajorgrids,
legend style={at={(0.97,0.03)},anchor=south east,legend cell align=left,align=left,draw=white!15!black}
]
\addplot [color=black,line width=1.5pt,mark size=3.5pt,only marks,mark=square,mark options={solid}]
  table[row sep=crcr]{%
0.12	13.24\\
0.24	19.1\\
0.36	22.96\\
0.48	25.97\\
0.6	28.22\\
0.72	29.34\\
0.84	29.81\\
0.96	29.96\\
1.08	30\\
1.2	30\\
};
\addlegendentry{GAP Bound};

\addplot [color=red,line width=1.5pt,mark size=4.0pt,only marks,mark=asterisk,mark options={solid}]
  table[row sep=crcr]{%
0.12	14.2299736443826\\
0.24	19.8767370376576\\
0.36	24.0579367095738\\
0.48	27.214904286794\\
0.6	29.0391353844178\\
0.72	29.7800252382848\\
0.84	29.9191386697171\\
0.96	29.974563085974\\
1.08	29.9976905756372\\
1.2	30\\
};
\addlegendentry{LP};

\addplot [color=blue,line width=1.5pt,mark size=4.0pt,only marks,mark=+,mark options={solid}]
  table[row sep=crcr]{%
0.12	13.24\\
0.24	19.1\\
0.36	22.96\\
0.48	25.97\\
0.6	28.22\\
0.72	29.34\\
0.84	29.81\\
0.96	29.96\\
1.08	30\\
1.2	30\\
};
\addlegendentry{B\&B};

\addplot [color=green,line width=1.5pt,mark size=4.0pt,only marks,mark=x,mark options={solid}]
  table[row sep=crcr]{%
0.12	13.82\\
0.24	19.8\\
0.36	23.54\\
0.48	26.35\\
0.6	28.16\\
0.72	29.18\\
0.84	29.65\\
0.96	29.92\\
1.08	29.97\\
1.2	30\\
};
\addlegendentry{\ref{algo:DMDMSA}};

\addplot [color=red,line width=1.5pt,mark size=4pt,only marks,mark=o,mark options={solid}]
  table[row sep=crcr]{%
0.12	13.82\\
0.24	19.8\\
0.36	23.52\\
0.48	26.35\\
0.6	28.19\\
0.72	29.18\\
0.84	29.65\\
0.96	29.92\\
1.08	29.97\\
1.2	30\\
};
\addlegendentry{\ref{algo:CMDMSA}};

\addplot [color=mycolor1,line width=1.5pt,mark size=5pt,only marks,mark=diamond,mark options={solid}]
  table[row sep=crcr]{%
0.12	11.05\\
0.24	14.82\\
0.36	18.61\\
0.48	22.21\\
0.6	24.68\\
0.72	26.72\\
0.84	27.91\\
0.96	28.79\\
1.08	29.29\\
1.2	29.63\\
};
\addlegendentry{(DM)$^2$S};

\end{axis}
\end{tikzpicture}%
	\caption{Number of associated \acp{SC} vs. bandwidth limit $B$ for the association algorithms with restrictions $R = 5$ Gbps, $N_l = 30$ averaged over 100 different scenarios.}
	\label{fig:fig5}
\end{figure}
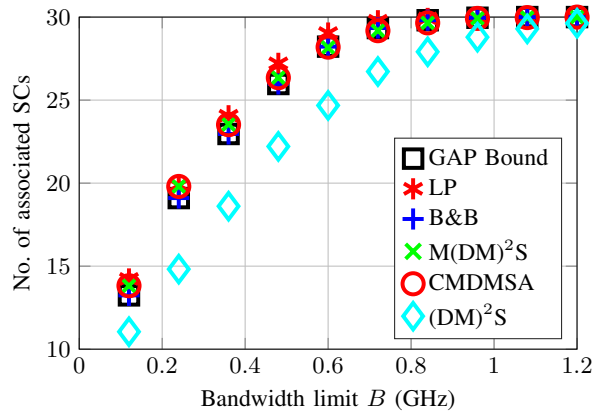

\subsection{Experiment 3: Effect of number of links limit}\label{Exp3}
In Fig. \ref{fig:fig6}, we plot the sum data rate versus the number of links limit $N_l$, where $N_l$ increases from 2 to 30 links. The backhaul data rate $R$ and bandwidth $B$ limits are provided such that they do not effect the association of \acp{SC}, so the effect of the number of links $N_l$ can be observed exclusively. Again, in this case, the \ac{GAP} bound remains unaffected as it does not consider the number of links constraint \eqref{cons4}. There is a performance gap between the exhaustive B\&B method and our presented algorithms in this case. This is due to the fact that our greedy algorithms \ref{algo:DMDMSA} and \ref{algo:CMDMSA} do not take an intelligent decision on the basis of the number of links. For example, there can be a situation, where due to less number of available links, we should give priority to some \acp{SC} to be associated to a certain \ac{NFP}, however our algorithms lack this decision power. Note that, such a situation occurs in a case when very few links $N_l$ are available, as it can be seen from Fig. \ref{fig:fig6} that the performance gap between the B\&B method and our proposed algorithms decreases with the increase in $N_l$. For the case of few available links $N_l$, one can use the (DM)$^2$S algorithm, whose performance is close to the B\&B method.

In Fig. \ref{fig:fig7}, we study the effect of the number of links constraint \eqref{cons4} on the number of associated \acp{SC}. It can be observed that the performance of all the considered algorithms is the same in case the links $N_l$ are either scarce or abundant. The performance of the algorithms varies in a region where the number of links ranges between 5 to 15. \ac{LP}-relaxed solution again leads the algorithms followed by a very close performance between B\&B and our presented greedy algorithms. Meanwhile, the (DM)$^2$S algorithm again results in the minimum number of associated \acp{SC}. The \ac{GAP} bound is not affected by the change in the number of links and provides the upper limit due to the relaxed bandwidth limit.

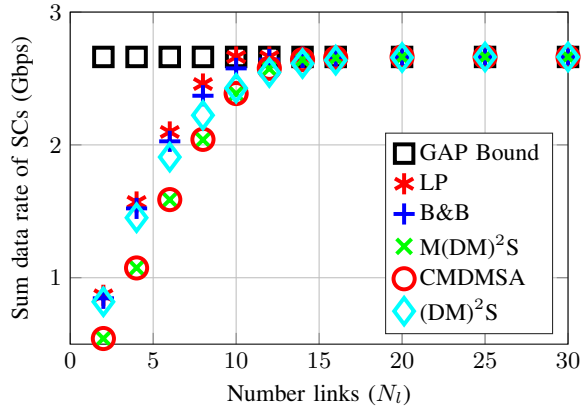
\begin{figure}[t!]\centering
	\setlength\figureheight{6cm}
	\setlength\figurewidth{8.2cm}
	\footnotesize
\definecolor{mycolor1}{rgb}{0.00000,1.00000,1.00000}%
\begin{tikzpicture}
\begin{axis}[%
width=\figurewidth,
height=\figureheight,
xmin=0, xmax=30,
xlabel near ticks,
xlabel={Number links ($N_l$)},
xmajorgrids,
ymin=0.5, ymax=3,
ylabel near ticks,
ylabel={Sum data rate of SCs (Gbps)},
ymajorgrids,
legend style={at={(0.97,0.03)},anchor=south east,legend cell align=left,align=left,draw=white!15!black}
]
\addplot [color=black,line width=1.5pt,mark size=3.5pt,only marks,mark=square,mark options={solid}]
  table[row sep=crcr]{%
2	2.6622\\
4	2.6622\\
6	2.6622\\
8	2.6622\\
10	2.6622\\
12	2.6622\\
14	2.6622\\
16	2.6622\\
20	2.6622\\
25	2.6622\\
30	2.6622\\
};
\addlegendentry{GAP Bound};

\addplot [color=red,line width=1.5pt,mark size=4.0pt,only marks,mark=asterisk,mark options={solid}]
  table[row sep=crcr]{%
2	0.867806039919086\\
4	1.57027317894069\\
6	2.0974500952056\\
8	2.45831563112386\\
10	2.66195328004413\\
12	2.6622\\
14	2.6622\\
16	2.6622\\
20	2.6622\\
25	2.6622\\
30	2.6622\\
};
\addlegendentry{LP};

\addplot [color=blue,line width=1.5pt,mark size=4.0pt,only marks,mark=+,mark options={solid}]
  table[row sep=crcr]{%
2	0.8451\\
4	1.5222\\
6	2.0268\\
8	2.37\\
10	2.5761\\
12	2.6427\\
14	2.6583\\
16	2.6616\\
20	2.6622\\
25	2.6622\\
30	2.6622\\
};
\addlegendentry{B\&B};

\addplot [color=green,line width=1.5pt,mark size=4.0pt,only marks,mark=x,mark options={solid}]
  table[row sep=crcr]{%
2	0.5424\\
4	1.0746\\
6	1.5873\\
8	2.0382\\
10	2.3856\\
12	2.5743\\
14	2.6415\\
16	2.6541\\
20	2.6622\\
25	2.6622\\
30	2.6622\\
};
\addlegendentry{\ref{algo:DMDMSA}};

\addplot [color=red,line width=1.5pt,mark size=4pt,only marks,mark=o,mark options={solid}]
  table[row sep=crcr]{%
2	0.5424\\
4	1.0737\\
6	1.5882\\
8	2.0415\\
10	2.3865\\
12	2.5767\\
14	2.6415\\
16	2.6541\\
20	2.6622\\
25	2.6622\\
30	2.6622\\
};
\addlegendentry{\ref{algo:CMDMSA}};

\addplot [color=mycolor1,line width=1.5pt,mark size=5pt,only marks,mark=diamond,mark options={solid}]
  table[row sep=crcr]{%
2	0.819\\
4	1.4517\\
6	1.9083\\
8	2.223\\
10	2.4267\\
12	2.5449\\
14	2.6127\\
16	2.6382\\
20	2.6577\\
25	2.6619\\
30	2.6622\\
};
\addlegendentry{(DM)$^2$S};

\end{axis}
\end{tikzpicture}%
	\caption{Sum data rate of \acp{SC} vs. number of links limit $N_l$ for the association algorithms with restrictions $R = 5$ Gbps, $B = 5$ GHz averaged over 100 different scenarios.}
	\label{fig:fig6}
\end{figure}

\begin{figure}[t!]\centering
	\setlength\figureheight{6cm}
	\setlength\figurewidth{8.2cm}
	\footnotesize
\definecolor{mycolor1}{rgb}{0.00000,1.00000,1.00000}%
\begin{tikzpicture}
\begin{axis}[%
width=\figurewidth,
height=\figureheight,
xmin=0, xmax=30,
xlabel near ticks,
xlabel={Number links ($N_l$)},
xmajorgrids,
ymin=5, ymax=30,
ylabel near ticks,
ylabel={No. of associated SCs},
ymajorgrids,
legend style={at={(0.97,0.03)},anchor=south east,legend cell align=left,align=left,draw=white!15!black}
]
\addplot [color=black,line width=1.5pt,mark size=3.5pt,only marks,mark=square,mark options={solid}]
  table[row sep=crcr]{%
2	30\\
4	30\\
6	30\\
8	30\\
10	30\\
12	30\\
14	30\\
16	30\\
20	30\\
25	30\\
30	30\\
};
\addlegendentry{GAP Bound};

\addplot [color=red,line width=1.5pt,mark size=4.0pt,only marks,mark=asterisk,mark options={solid}]
  table[row sep=crcr]{%
2	6\\
4	12\\
6	18\\
8	24\\
10	29.991776001471\\
12	30\\
14	30\\
16	30\\
20	30\\
25	30\\
30	30\\
};
\addlegendentry{LP};

\addplot [color=blue,line width=1.5pt,mark size=4.0pt,only marks,mark=+,mark options={solid}]
  table[row sep=crcr]{%
2	5.98\\
4	11.92\\
6	17.6\\
8	22.93\\
10	27.59\\
12	29.38\\
14	29.87\\
16	29.98\\
20	30\\
25	30\\
30	30\\
};
\addlegendentry{B\&B};

\addplot [color=green,line width=1.5pt,mark size=4.0pt,only marks,mark=x,mark options={solid}]
  table[row sep=crcr]{%
2	5.98\\
4	11.92\\
6	17.56\\
8	22.78\\
10	26.83\\
12	28.99\\
14	29.75\\
16	29.91\\
20	30\\
25	30\\
30	30\\
};
\addlegendentry{\ref{algo:DMDMSA}};

\addplot [color=red,line width=1.5pt,mark size=4pt,only marks,mark=o,mark options={solid}]
  table[row sep=crcr]{%
2	5.98\\
4	11.92\\
6	17.56\\
8	22.79\\
10	26.84\\
12	29.02\\
14	29.75\\
16	29.91\\
20	30\\
25	30\\
30	30\\
};
\addlegendentry{\ref{algo:CMDMSA}};

\addplot [color=mycolor1,line width=1.5pt,mark size=5pt,only marks,mark=diamond,mark options={solid}]
  table[row sep=crcr]{%
2	5.94\\
4	11.63\\
6	16.77\\
8	21.19\\
10	24.58\\
12	27.07\\
14	28.67\\
16	29.33\\
20	29.87\\
25	29.99\\
30	30\\
};
\addlegendentry{(DM)$^2$S};

\end{axis}
\end{tikzpicture}%
	\caption{Number of associated \acp{SC} vs. number of links limit $N_l$ for the association algorithms with restrictions $R = 5$ Gbps, $B = 5$ GHz averaged over 100 different scenarios.}
	\label{fig:fig7}
\end{figure}
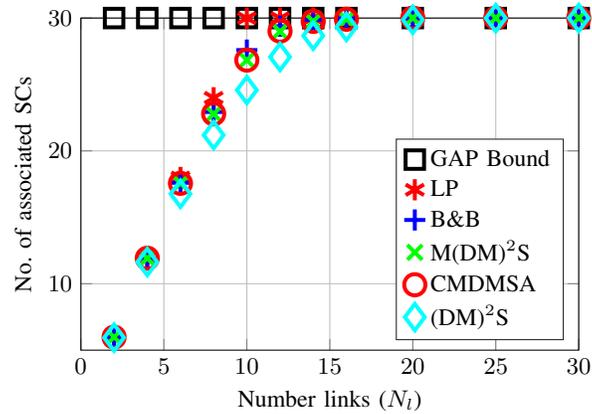

\section{Computational Complexity}\label{sec:Comp_Complexity}
In this section, we provide the worst case complexity and average run time comparison of the algorithms. The B\&B method has the same complexity as that of Brute-force in the worst case \cite{BandB_Algo} and \cite{zhang1996branch}. Table \ref{CompComplex} compares, in terms of the number of flops, the worst case complexity of the presented algorithms with the Brute-force and (DM)$^2$S algorithm. It can be noticed that the presented algorithms are computationally less expensive than the B\&B algorithm in the worst case and slightly more expensive than the (DM)$^2$S algorithm, however, provides the same performance as can be seen from the numerical results.

Table \ref{tab:TimComp} compares the run time speed of the proposed algorithms \ref{algo:DMDMSA} and \ref{algo:CMDMSA} with the \ac{GAP} bound and the exhaustive B\&B method. By observing the Figures \ref{fig:fig2}, \ref{fig:fig4} and \ref{fig:fig6} and as per the constraints \eqref{cons1}, \eqref{cons2} and \eqref{cons4}, all the limits are applied simultaneously. The results are averaged over 100 different scenarios. It can be noticed from Table \ref{tab:TimComp} that the proposed algorithms are faster than the \ac{GAP} bound and B\&B method. Therefore, the proposed algorithms are computationally less expensive and are practically applicable.

\begin{table}[t!]
\renewcommand{\arraystretch}{1.1}
\centering
\caption{Computational complexity of the algorithms.}
\label{CompComplex}
\begin{tabular}{|c|c|}
\hline
\textbf{Algorithm}  &  \textbf{Complexity Order}     \\ \hline
Brute-force         &  $\mathcal{O}\left( N_D^3 N_{SC}^{N_D + 2} \right)$      \\ \hline
\ref{algo:DMDMSA}   &  $\mathcal{O}\left( N_D^2 N_{SC}^2 \right)$               \\ \hline
\ref{algo:CMDMSA}   &  $\mathcal{O}\left( N_D^2 N_{SC}^2 \right)$               \\ \hline
(DM)$^2$S           &  $\mathcal{O}\left( N_D N_{SC} \right)$                   \\ \hline
\end{tabular}
\end{table}

\begin{table}[t!]
\renewcommand{\arraystretch}{1.1}
\centering
\caption{Run Time comparison of the algorithms for the system with constraints $R = 2.3$ Gbps, $B = 0.6$ GHz and $N_l = 8$ averaged over 100 different scenarios.}
\label{tab:TimComp}
\begin{tabular}{|c|c|}
\hline
\textbf{Method}          & \textbf{Elapsed Time (seconds)}   \\ \hline
\ac{GAP} bound           & 0.036                             \\ \hline
B\&B                     & 45.145                            \\ \hline
\ref{algo:DMDMSA}        & 0.000445                          \\ \hline
\ref{algo:CMDMSA}        & 0.000523                          \\ \hline
\end{tabular}
\end{table}

\section{Conclusions}\label{sec:Conc}
This work considered the use of \acp{NFP} as fronthaul hubs to provide backhaul connectivity to an ultra dense network of \acp{SC}. An association problem of \acp{SC} and \acp{NFP} is formulated in order to maximize the sum data rate of the overall \ac{SC} network along with the consideration of backhaul and \ac{NFP}-related limitations such as backhaul data rate, bandwidth and number of links limits of \ac{NFP}. In the literature, the association problem of \acp{SC} and \acp{NFP} was claimed to be NP-hard, however, in this work, we have shown it to be NP-hard by relating it with the \ac{GAP}. Then, using this relevance, a performance bound is derived and verified by a numerical comparison with LP relaxed solution as well as B\&B exhaustive search. Further, two efficient (less complex) greedy algorithms are designed to solve the problem. The performance of the presented algorithms is same as of exhaustive B\&B search as well as the presented bounds, both in terms of sum data rate and number of associated \acp{SC}. However, the presented algorithms has a lower complexity than all the other methods, thus they can be practically implemented. In future work, we will consider another aspect of the problem where the objective is to serve maximum number of users, i.e., network centric approach. Similar techniques can be used to derive related bounds and also similar greedy algorithms can be presented for such a case.

\balance
\bibliographystyle{IEEEtran}
\bibliography{JSAC_Airbone_SC_NFP}
\end{document}